\theoremstyle{plain}
\newtheorem{thm}{Theorem}
\newtheorem{lem}{Lemma}
\newtheorem{defn}{Definition}
\theoremstyle{remark}
\newcommand{\sst}{\scriptscriptstyle}
\renewcommand{\1}{\one}
\renewcommand{\2}{\two}
\newcommand{\beq}{\begin{equation}}
\newcommand{\eeq}{\end{equation}}
\newcommand{\ot}{\otimes}
\newcommand{\ra}{\to}
\newcommand{\SRN}{{\rm N}}
\newcommand{\de}{\delta}
\newcommand{\la}{\lambda}
\newcommand{\CA}{{\mathcal A}}
\newcommand{\CB}{{\mathcal B}}
\newcommand{\CD}{{\mathcal D}}
\newcommand{\CQ}{{\mathcal Q}}
\newcommand{\CR}{{\mathcal R}}
\newcommand{\SA}{{\mathsf A}}
\newcommand{\SB}{{\mathsf B}}
\newcommand{\SC}{{\mathsf C}}
\newcommand{\SD}{{\mathsf D}}
\newcommand{\SG}{{\mathsf G}}
\newcommand{\SM}{{\mathsf M}}
\newcommand{\SO}{{\mathsf O}}
\newcommand{\SQ}{{\mathsf Q}}
\newcommand{\ST}{{\mathsf T}}
\newcommand{\SU}{{\mathsf U}}
\newcommand{\SW}{{\mathsf W}}
\newcommand{\su}{{\mathsf u}}
\newcommand{\sv}{{\mathsf v}}
\newcommand{\one}{{\mathfrak 1}}
\newcommand{\two}{{\mathfrak 2}}
\newcommand{\BC}{{\mathbb C}}
\newcommand{\BS}{{\mathbb S}}
\newcommand{\BZ}{{\mathbb Z}}
\newcommand{\srn}{{\sst\rm N}}
\newcommand{\SRM}{{\rm M}}
\newcommand{\rf}[1]{(\ref{#1})}
\newcommand{\en}{{\rm e}_\SRN}
\newcommand{\aufz}
{\begin{list}{$\bullet$}{\topsep0cm \itemsep0cm \parsep0cm}}
\newcommand{\eaufz}{\end{list}}
\begin{document}


\title{Reconstruction of Baxter $\SQ$-operator from Sklyanin SOV \newline for cyclic representations of integrable quantum models}

\author{G. Niccoli}

\address{DESY, Notkestr. 85, 22603 Hamburg, Germany}

\maketitle

{\vspace{-7cm} \tt {DESY 09-227}}

\vspace{6cm}

{\bf Abstract}
\\
\begin{quotation}
\hspace{-0.7cm}  \ In \cite{NT}, the spectrum (eigenvalues and eigenstates) of a lattice regularizations of the Sine-Gordon model
has been completely characterized in terms of polynomial solutions with certain properties of the
Baxter equation. This characterization for cyclic representations has been derived by the use of the Separation of
Variables (SOV) method of Sklyanin and by the direct construction of the Baxter $\SQ$-operator family. Here, we
reconstruct the Baxter $\SQ$-operator and the same characterization of the spectrum by only using
the SOV method. This analysis allows us to deduce the main features required for the extension to cyclic representations of other integrable quantum models of this kind of spectrum characterization.    

\end{quotation}

\vspace{4cm}

{\par\small
{\em Keywords:} Integrable Quantum Systems; Separation of Variables; Baxter $\SQ$-operator; PACS code 02.30.IK}
\vspace{1cm}
\newpage

\section{Introduction}
The integrability of a quantum model is by definition related to the existence of a mutually commutative family $\CQ$ of self-adjoint operators $\ST$ such that
\begin{equation}\label{def-integrability}
\begin{aligned}
& {\rm (A)}\quad[\,\ST\,,\,\ST'\,]\,=\,0,\\
& {\rm (B)}\quad[\,\ST\,,\,\SU\,]\,=\,0,\\
& {\rm (C)}\quad{\rm if}\;\;[\,\ST\,,\,\SO\,]\,=\,0\,,
\end{aligned}\;\;
\begin{aligned}
& \forall\, \ST,\ST'\in\CQ\,,\\
& \forall\,\ST\in\CQ\,,\\
&\forall \,\ST \in \CQ, \;\;{\rm then}\;\;\SO=\SO(\CQ)\,,
\end{aligned}\end{equation}
where $\SU$ is the unitary operator defining the time-evolution in the model; note that the property (C) stays for the completeness of the family $\CQ$. In the framework of the quantum inverse scattering method \cite{FT79, KS79, FST} the Lax operator $L(\la)$ is the mathematical tool which allows to define the transfer matrix:
\begin{equation}\label{Tdef}
\ST^{}(\la)\,=\,{\rm tr}_{\BC^2}^{}\SM(\la)\,, \qquad
\SM(\la)\,\equiv\,\left(\begin{matrix}\SA(\la) & \SB(\la)\\
\SC(\la) & \SD(\la)\end{matrix}\right)\,\equiv\,
L_\SRN^{}(\la)\dots L_1^{}(\la)\,,
\end{equation}
a one parameter family of mutual commutative self-adjoint operators. The integrability of the model follows from  $\ST(\la)$ if the properties (B) and (C) of definition \rf{def-integrability} can be proven for it. In some quantum model the integrability is derived by proving the existence of a further one-parameter family of self-adjoint operators the $\SQ$-operator which by definition satisfies the following properties:
\begin{equation}
[\, \SQ(\la)\,,\,\SQ(\mu)\,]\,=\,0\,, \qquad [\, \ST(\la)\,,\,\SQ(\mu)\,]\,=\,0\,,\qquad \forall \la,\mu\,\in\BC,
\end{equation}
plus the Baxter equation with the transfer matrix:
\begin{equation}\label{BAX}
\ST(\la)\SQ(\la)\,=\,{\tt a}(\la)\SQ(q^{-1}\la)+{\tt d}(\la)\SQ(q\la)\,.
\end{equation}
This is in particular the case for those models (like Sine-Gordon \cite{NT}) for which the time-evolution operator $\SU$ is expressed in terms of $\SQ$. A natural question arises: Is the integrable structure of these quantum models completely characterized by the transfer matrix $\ST(\la)$?

Note that a standard procedure\footnote{It is worth recalling that
there are also others constructions of the $\SQ$-operator. An interesting
example is presented in the series of works \cite{BLZ-I, BLZ-II,
BLZ-III} by V.V. Bazhanov, S.L. Lukyanov and A.B. Zamolodchikov on the
integrable structure of conformal field theories. In \cite{BLZ-II, BLZ-III} the $\SQ$-operator
is obtained as a transfer-matrix by a trace procedure of a fundamental
$L$-operator with $q$-oscillator representation for the auxiliary space (see
also \cite{AF96, RW02}). This construction can be extended to massive
integrable quantum field theories as it was argued by the same authors
in \cite{BLZ-IV}.} to prove the existence of $\SQ(\la)$ is by a direct
construction of an operator solution of the Baxter equation \rf{BAX}. Moreover, the coefficients ${\tt a}(\la)$ and ${\tt d}(\la)$ as well as the
analytic and asymptotics properties of $\SQ(\la)$ are some model dependent features which are derived by the
construction.
Let us recall that the general strategy \cite{Ba73, BS, GP, De, DKM} of this construction is to find a {\it gauge}
transformation\footnote{It leaves unchanged the transfer matrix while modifies the monodromy matrix $\SM(\la)$ defined in \rf{Tdef} .} such that the action of each gauge transformed Lax
matrix on $\SQ(\la)$ becomes upper-triangular. Then the $\SQ$-operator assumes a factorized {\it local} form and
the problem of its existence in such a form is reduced to the problem of the existence of some model dependent {\it special} function\footnote{The {\it quantum dilogarithm} functions \cite{FK2, F2, Ru, Wo, PT2, K1, K2, BT03, T2, V2} for example appear in the Sinh-Gordon model \cite{BT06}, in their {\it non-compact} form, and in the Sine-Gordon model \cite{NT}, in their {\it cyclic} form.}.

It is worth pointing out that on the one hand the construction of these special functions for general models can
represent a concrete technical problem\footnote{The Sine-Gordon model at irrational values of the coupling
$\beta^2$ is a simple case where this kind of problem emerges.} and that on the other hand the existence of
such functions is only a sufficient criterion for the existence of $\SQ(\la)$.
It is then a relevant question if it is possible to bypass this kind of construction providing a different proof of the
existence of $\SQ(\la)$.  

\vspace{0.4cm}

Given an integrable quantum model the first fundamental task to solve is the exact solution of its {\it spectral
problem}, i.e. the determination of the eigenvalues and the simultaneous eigenstates of the operator family $\CQ$, defined in \rf{def-integrability}.
There are several methods to analyze this spectral problem as the {\it coordinate} Bethe ansatz \cite{Be31, Bax82, ABBQ87}, the $\ST\SQ$ method \cite{Bax82}, the {\it algebraic} Bethe ansatz (ABA) \cite{FT79, KS79, FST}, the {\it analytic} Bethe ansatz \cite{R83} and the separation of variables (SOV) method of Sklyanin \cite{Sk1,Sk2,Sk3}; this last one seems to be more promising. Indeed, on the one hand it resolves the problems related to the reduced applicability of other methods (like ABA) and on the other hand it directly implies the completeness of the characterization of the spectrum which instead for other methods has to be proven.

For cyclic representations \cite{Ta} of integrable quantum models the SOV method should lead to the characterization of
the eigenvalues and the simultaneous eigenstates of the transfer matrix $\ST(\la)$ by a {\it finite}\footnote{The
number of equations in the system is finite and related to the dimension of the cyclic representation.} system of
Baxter-like equations. However, it is worth pointing out that such a characterization of the spectrum is not the most
efficient; this is in particular true in view of the analysis of the continuum limit. Here the main question reads: Is it
possible to define a set of conditions under which the SOV characterization of the spectrum can be reformulated
in terms of a functional Baxter equation? In fact, this is equivalent to ask if we can reconstruct the $\SQ$-operator
from the finite system of Baxter-like equations. In this case the solution of the spectral problem is
reduced to the classification of the solutions of the Baxter equation which satisfy some analytic and asymptotic
properties fixed by the operators $\ST$ and $\SQ$.

The lattice Sine-Gordon model is used as a concrete example where these questions about quantum integrability
find a complete and affirmative answer. Indeed, in section 3, we show that the SOV characterization of the
transfer matrix spectrum is exactly equivalent to a functional equation of the form $\det D(\Lambda
)=0$, where $D(\lambda )$  (see \rf{D-matrix}) is a one-parameter family of {\it quasi-tridiagonal} matrices. In
section 4, we show that this functional equation is indeed equivalent to the Baxter functional equation and, in
section 5, we use these results to reconstruct the Baxter $\SQ$-operator with the same level of accuracy obtained
by the direct construction presented in \cite{NT}. It is worth pointing out that these results allow us to prove that
the transfer matrix $\ST^{}(\la)$ (plus the $\Theta$-charge for even chain) describes the family $\CQ$ of
complete commuting self-adjoint charges which implies the quantum integrability of the model according to
definition \rf{def-integrability}. So that in the Sine-Gordon model the Baxter $\SQ$-operator plays only the role of
a useful auxiliary object.

Let us point out that one of the main advantages of the spectrum characterization derived for the Sine-Gordon model is the
possibility to prove an exact reformulation in terms of non-linear integral equations\footnote{This type of equations were before introduced in a different framework in \cite{KP91,KBP91}} (NLIE). This will be the subject of a
future publication where the NLIE characterization will lead us by the implementation of the continuum limit to the
description of the Sine-Gordon spectrum in all the interesting regimes. These results will be shown to be consistent with those obtained previously in the literature\footnote{See \cite{FR02I, FR02II} for a related model analyzed in the framework of ABA and \cite{FR03I, FR03II} for the corresponding finite volume continuum limit.} \cite{DDV92, DDV94, DDV97, FMQR96, FRT98, FRT99} (see \cite{F00, R01} for reviews). Note that the method based on the reformulation of the spectral problem in terms of NLIE has been also used recently \cite{T} to derive the Sinh-Gordon spectrum in finite volume and to characterize the spectrum in the  infrared and ultraviolet limits.

 The analysis of the Sine-Gordon model allows us to infer the main features required to extend this kind of spectrum characterization to cyclic representations of other integrable quantum models. This is particularly relevant for those models for which a direct construction of the Baxter $\SQ$-operator encounters technical difficulties.
\vspace*{1mm}
{\par\small
{\em Acknowledgments.} I would like to thank  J. Teschner for stimulating discussions and suggestions on a preliminary version of this work and J.-M. Maillet for the interest shown.

I gratefully acknowledge support from the EC by the Marie Curie Excellence Grant MEXT-CT-2006-042695.}

\section{The Sine-Gordon model}\label{SOV}
We use this section to recall the main results derived in \cite{NT} on the description in terms of SOV of the lattice Sine-Gordon model. This will be used as the starting point to introduce a characterization of the spectrum of the transfer matrix $\ST(\la)$ which will lead to the construction of the $\SQ$-operator from SOV.

\subsection{Definitions}\label{T-op}
The lattice Sine-Gordon model can be characterized by the following Lax matrix\footnote{The lattice regularization of the Sine-Gordon model that we consider here goes back to \cite{FST,IK} and is related to formulations which have more recently been studied in  \cite{FV94,BBR,Ba08}.}:
\begin{equation}\label{Lax}\begin{aligned}
 L^{\rm\sst SG}_n(\la)
  &= \frac{\kappa_n}{i} \left( \begin{array}{cc}i\,\su_n^{}(q^{-\frac{1}{2}}\kappa_n^{}\sv_n^{}+q^{+\frac{1}{2}}\kappa^{-1}_n\sv_n^{-1}) &
\la_n^{} \sv_n^{} - \la^{-1}_n \sv_n^{-1}  \\
 \la_n^{} \sv_n^{-1} - \la^{-1}_n \sv_n^{} &
i\,\su_n^{-1}(q^{+\frac{1}{2}}\kappa^{-1}_n\sv_n^{}+q^{-\frac{1}{2}}\kappa_n^{}\sv_n^{-1})
 \end{array} \right) ,
\end{aligned}\end{equation}
where $\la_n\equiv\la/\xi_n$ for any $n\in \{1,...,\SRN\}$ with $\xi_n$ and  $\kappa_n$ parameters of the model.
For any $n \in \{1,...,\SRN\}$ the couple of operators ($\su_n$,$\sv_n$) define a Weyl algebra ${\cal W}_{n}$:
\begin{equation}\label{Weyl}
\su_n\sv_m=q^{\de_{nm}}\sv_m\su_n\,,\qquad{\rm where}\;\;
q=e^{-\pi i \beta^2}\,.
\end{equation}
We will restrict our attention to the case in which $q$ is a root of unity,
\begin{equation}\label{beta}
\beta^2\,=\,\frac{p'}{p}\,,\qquad p,p'\in\BZ^{>0}\,,
\end{equation}
with $p \equiv 2l+1$ odd and $p'$ even so that $q^{p}=1$. In this case each Weyl algebra  ${\cal W}_{n}$ admits a finite-dimensional representation of dimension $p$. In fact, we can represent the operators $\su_n$, $\sv_n$ on the space of complex-valued functions $\psi:\BS_p^{\SRN}\ra \BC$ as
\begin{equation}\label{reprdef}
\begin{aligned}
&\su_n\cdot\psi(z_1,\dots,z_\SRN)\,=\,u_n z_n\psi(z_1,\dots,
z_n,\dots,z_\SRN)\,,\\
&\sv_n\cdot\psi(z_1,\dots,z_\SRN)\,=\,v_n\psi(z_1,\dots, q^{-1} z_n,\dots,z_\SRN)\,.
\end{aligned}
\end{equation}
where $\BS_p=\{q^{2n};n=0,\dots,2l\}$ is a subset of the unit circle; note that
$\BS_p=\{q^{n};n=0,\dots,2l\}$ since $q^{2l+2}=q$.

The monodromy matrix $\SM(\la)$ defined in \rf{Tdef} in terms of the Lax-matrix \rf{Lax} satisfies the quadratic relations:
\begin{equation}\label{YBA}
R(\la/\mu)\,(\SM(\la)\ot 1)\,(1\ot\SM(\mu))\,=\,(1\ot\SM(\mu))\,(\SM(\la)\ot 1)R(\la/\mu)\,,
\end{equation}
where the auxiliary $R$-matrix is given by
\begin{equation}\label{Rlsg}
 R(\la) =
 \left( \begin{array}{cccc}
 q^{}\la-q^{-1}\la^{-1} & & & \\ [-1mm]
 & \la-\la^{-1} & q-q^{-1} & \\ [-1mm]
 & q-q^{-1} & \la-\la^{-1} & \\ [-1mm]
 & & &  q\la-q^{-1}\la^{-1}
 \end{array} \right) \,.
\end{equation}

The elements of $\SM(\la)$ generate a representation $\CR_\SRN$ of the so-called Yang-Baxter algebra characterized by the $4\SRN$ parameters $\kappa=(\kappa_1,\dots,\kappa_\SRN)$,
$\xi=(\xi_1,\dots,\xi_\SRN)$, $u=(u_1,\dots,u_\SRN)$ and $v=(v_1,\dots,v_\SRN)$; in the present paper we will restrict to the case $u_n=1$, $v_n=1$, $n=1,\dots,\SRN$. The commutation relations \rf{YBA} are at the basis of the proof of the mutual commutativity of the $\ST$-operators.

In the case of a lattice with $\SRN$ even quantum sites, we have also to introduce the operator:
\begin{equation}  \label{topological-charge}
\Theta =\prod_{n=1}^{\SRN}\sv_{n}^{(-1)^{1+n}},
\end{equation}
which plays the role of a {\it grading operator} in the Yang-Baxter algebra:

{\bf Proposition \ 6 \ of \ \cite{NT}} \ $\Theta $ {\it commutes with the transfer matrix and satisfies
the following commutation relations with the entries of the monodromy matrix:}
\begin{eqnarray}
\Theta \SC(\lambda ) &=&q\SC(\lambda )\Theta \text{, \ \ \ }[\SA(\lambda ),\Theta
]=0, \\
\SB(\lambda )\Theta &=&q\Theta \SB(\lambda ),\text{ \ \ }[\SD(\lambda ),\Theta ]=0.
\end{eqnarray}
Moreover, the $\Theta $-charge allows to express the asymptotics of the transfer matrix as:
\begin{equation}
\lim_{\log\lambda \rightarrow \mp\infty}\lambda ^{\pm\SRN}\ST (\lambda )=\left(
\prod_{a=1}^{\SRN}\frac{\kappa _{a}\xi _{a}^{\pm 1}}{i}\right) \left(
\Theta +\Theta^{-1}\right).  \label{asymptotics-t}
\end{equation}
Let us denote with $\Sigma _{\ST}$ the spectrum (the set of the eigenvalue functions $t(\lambda )$) of the transfer matrix $\ST(\lambda )$. By the definitions \rf{Tdef} and \rf{Lax}, then $\Sigma_{\ST}$ is contained\footnote{Here with $\mathbb{C}[x,x^{-1}]_{M}$ we are denoting the linear space of the Laurent polynomials of degree $M$ in the variable $x\in \mathbb{C}$.} in $\mathbb{C}[\lambda ^{2},\lambda^{-2}]_{(\SRN+\text{e}_{\SRN}-1)/2}$, where we have used the notation e$_{\SRN}=0$ for $\SRN$ odd and $1$ for $\SRN$ even.

Note that in the case of $\SRN$ even, the $\Theta $-charge naturally induces the grading $\Sigma_{\ST}=\bigcup_{k=0}^{l}\Sigma _{\ST}^{k}$, where:
\begin{equation}
\Sigma
_{\ST}^{k}\equiv \left\{ t(\lambda )\in \Sigma _{\ST}:\lim_{\log \lambda
\rightarrow \mp \infty }\lambda ^{\pm \SRN}t(\lambda )=\left( \prod_{a=1}^{\SRN}\frac{\kappa _{a}\xi _{a}^{\pm 1}}{i}\right) (q^{k}+q^{-k})\right\} .
\end{equation}
This simply follows by the asymptotics of $\ST(\lambda )$ and by its commutativity with $\Theta $. In particular,
any $t(\lambda )\in \Sigma_{\ST}^{k}$ is a $\ST$-eigenvalue corresponding to simultaneous eigenstates of
$\ST(\lambda )$ and $\Theta $\ with $\Theta $-eigenvalues $q^{\pm k}$.

\subsection{Cyclic SOV representations}

The separation of variables method of Sklyanin is based on the observation that the
spectral problem for $\ST(\la)$ simplifies considerably if one works in an auxiliary representation where the commutative family of operators $\SB(\la)$ is diagonal.

In the case of the Sine-Gordon model the vector space\footnote{It is always possible to provide the structure of Hilbert space to this finite-dimensional linear space. In particular, the scalar product in the SOV space is naturally introduced by the requirement that the transfer matrix is self-adjoint in the SOV representation. Appendix B addresses this issue.} $\BC^{p^\SRN}$ underlying the SOV representation can be identified with the space of functions $\Psi(\eta)$ defined for $\eta$ taken from the discrete set
\begin{equation}
{\mathbb B_\SRN}\,\equiv\,\big\{\,(q^{k_1}\zeta_1,\dots,q^{k_\SRN}\zeta_\SRN)\,;\,(k_1,\dots,k_\SRN)\in\BZ_p^\SRN\,\big\}\,,
\end{equation}
on these functions $\SB(\la)$ acts as a multiplication operator,
\begin{equation}\label{Bdef}
\SB_\SRN(\la)\,\Psi(\eta)\,=\,\eta_\SRN^{{\rm e}_\SRN}\,b_\eta(\la)\,\Psi(\eta)\,,\qquad b_\eta(\la)\,\equiv\,
\prod_{n=1}^{\SRN}\frac{\kappa _{n}}{i}\prod_{a=1}^{[\SRN]}\left( \la/\eta_a-\eta_a/\la\right)\,;
\end{equation}
where $[\SRN]\equiv\SRN-\en$ and $\eta_1,^{}\dots,\eta_{[\SRN]}^{}$ are the zeros of $b_\eta(\la)$. In the case of even $\SRN$ it turns out that we need a supplementary variable $\eta_\SRN$ in order to be able to parameterize the spectrum of $\SB(\la)$.

In \cite{NT} we have proven that for general values of the parameters $\kappa$ and $\xi$ of the original representation it is possible to construct these SOV representations and moreover we have defined the map which fixes the  SOV parameter $\eta$ in terms of the parameters $\kappa$ and $\xi$.

In these SOV representations the spectral problem for $\ST(\la)$ is reduced to the following discrete system of Baxter-like equations in the wave-function $\Psi_t(\eta)=\langle\,\eta\,|\,t\,\rangle$ of a $\ST$-eigenstate $|\,t\,\rangle$:
\begin{equation}\label{SOVBax1}
t(\eta _{r})\Psi(\eta)\,=\,{\tt a}(\eta _{r})\ST_r^-\Psi(\eta)+{\tt d}(\eta _{r})\ST_r^+\Psi(\eta)\, \qquad \text{ \ }\forall r\in \{1,...,[\SRN]\},
\end{equation}
where $\ST_r^{\pm}$ are the operators defined by
\[
\ST_r^{\pm}\Psi(\eta_1,\dots,\eta_\SRN)=\Psi(\eta_1,\dots,q^{\pm 1}\eta_r,\dots,\eta_\SRN)\,,
\]
while the coefficients ${\tt a}(\lambda )$ and ${\tt d}(\lambda )$ are defined by:
\begin{equation}
{\tt a}(\lambda )=\prod_{n=1}^{\SRN}\frac{\kappa _{n}}{i\lambda
_{n}}(1-iq^{-1/2}\lambda _{n}\kappa _{n})(1-iq^{-1/2}\frac{\lambda _{n}}{\kappa
_{n}}), \qquad {\tt d}(\lambda )=q^{\SRN}{\tt a}(-\lambda q).
\end{equation}%
In the case of $\SRN$ even we have to add to the system \rf{SOVBax1} the following equation in the variable $\eta_\SRN$:
\begin{equation}
\ST_\SRN^+\Psi_{\pm k}(\eta)\,=\,q^{\pm k}\Psi_{\pm k}(\eta),
\end{equation}
for $t(\lambda )\in \Sigma _{\ST}^{k}\ \ $with $\ k\in \{0,...,l\}.$ Note that the cyclicity of these SOV representations is expressed by the identification of $( {\ST}^{\pm}_{j} )^{p}$ with the identity for any $\ j\in \{1,...,\SRN\}.$

\section{SOV characterization of $\ST$-eigenvalues}\label{Compatib}
Let us introduce the one parameter family $D(\la)$ of $p\times p$ matrix:
\begin{equation}\label{D-matrix}
D(\la) \equiv
\begin{pmatrix}
t(\la)   &-{\tt d}(\la)&   0        &\cdots & 0 & -{\tt a}(\la)\\
-{\tt a}(q\la)& t(q\la)&-{\tt d}(q\la)& 0     &\cdots & 0 \\
      0       & {\quad} \ddots      & &     &     &         \vdots   \\
  \vdots           &     &  \cdots    &  &       &     \vdots   \\
     \vdots         &     &   & \cdots &       &   \vdots     \\
     \vdots   &            &    &  &  \ddots{\qquad}     &   0 \\
 0&\ldots&0& -{\tt a}(q^{2l-1}\la)& t(q^{2l-1}\la) &
-{\tt d}(q^{2l-1}\la)\\
-{\tt d}(q^{2l}\la)   & 0      &\ldots      &     0  & -{\tt a}(q^{2l}\la)& t(q^{2l}\la)
\end{pmatrix}
\end{equation}
where for now $t(\la )$ is just an even Laurent polynomial of degree $\SRN+$e$_{\SRN}-1$ in $\la$.
\begin{lem}
\label{detD}The determinant $\det_{p}$$D$ is an even Laurent
polynomial of maximal degree $\SRN+$e$_{\SRN}-1$ in $\Lambda \equiv \lambda ^{p}$.
\end{lem}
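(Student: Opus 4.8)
The plan is to expand $\det\nolimits_{p}D(\la)$ by the Leibniz rule and to exploit that $D(\la)$ is \emph{quasi-tridiagonal}: indexing rows and columns by $r\in\{0,\dots,p-1\}$ read modulo $p$, its only nonzero entries are $D(\la)_{r,r}=t(q^{r}\la)$, $D(\la)_{r,r+1}=-{\tt d}(q^{r}\la)$ and $D(\la)_{r,r-1}=-{\tt a}(q^{r}\la)$. Thus a permutation $\sigma\in\FS_{p}$ contributes to $\det\nolimits_{p}D(\la)=\sum_{\sigma}\sgn(\sigma)\prod_{r}D(\la)_{r,\sigma(r)}$ only when $\sigma(r)\in\{r-1,r,r+1\}$ for all $r$, and a short argument (if a point is moved by $+1$ without being part of an adjacent transposition, the shift propagates all the way round the cycle, and likewise for $-1$) shows that the only such $\sigma$ are the matchings $S$ of the cycle graph $C_{p}$ — products of pairwise disjoint adjacent transpositions $(r,r+1)$ (indices mod $p$), $S=\emptyset$ allowed — together with the two $p$-cycles $r\mapsto r\pm1$. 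As $p=2l+1$ is odd both $p$-cycles are even permutations, while a product of $p$ off-diagonal entries $-{\tt a}$ or $-{\tt d}$ carries the sign $(-1)^{p}=-1$; hence
\begin{equation*}
\det\nolimits_{p}D(\la)=\sum_{S}(-1)^{|S|}\prod_{\{r,r+1\}\in S}{\tt a}(q^{r+1}\la)\,{\tt d}(q^{r}\la)\prod_{s\ \mathrm{unmatched}}t(q^{s}\la)\;-\;\prod_{r=0}^{p-1}{\tt a}(q^{r}\la)\;-\;\prod_{r=0}^{p-1}{\tt d}(q^{r}\la).
\end{equation*}

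I then show that $\det\nolimits_{p}D$ depends on $\la$ only through $\Lambda=\la^{p}$. It is plainly a Laurent polynomial in $\la$, and $D(q\la)=P^{-1}D(\la)P$ with $P$ the cyclic-shift permutation matrix, so $\det\nolimits_{p}D(q\la)=\det\nolimits_{p}D(\la)$; since $q$ is a primitive $p$-th root of unity, a Laurent polynomial fixed by $\la\mapsto q\la$ involves only powers of $\la$ divisible by $p$, whence $\det\nolimits_{p}D\in\BC[\Lambda,\Lambda^{-1}]$. That $\det\nolimits_{p}D$ is even in $\Lambda$ follows in the same spirit from the behaviour of $D(\la)$ under $\la\mapsto\la^{-1}$, using $t(\la^{-1})=t(\la)$ together with the reflection relations of ${\tt a},{\tt d}$ recorded in \cite{NT}; I expect this to be a short check, so the substance is the degree bound.

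For the degree I use that ${\tt a}(\la)$ and ${\tt d}(\la)=q^{\SRN}{\tt a}(-q\la)$ are Laurent polynomials with $\la$-exponents in $[-\SRN,\SRN]$, whereas $t(\la)$ has $\la$-exponents in $[-(\SRN+{\rm e}_{\SRN}-1),\SRN+{\rm e}_{\SRN}-1]$. A matching $S$ with $|S|=k$ leaves $p-2k$ points unmatched, so its term has top $\la$-degree at most $2k\SRN+(p-2k)(\SRN+{\rm e}_{\SRN}-1)=p(\SRN+{\rm e}_{\SRN}-1)+2k(1-{\rm e}_{\SRN})$, and a matching of a $(2l+1)$-cycle has $k\le l$, i.e.\ $2k\le p-1$. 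Hence for $\SRN$ even (${\rm e}_{\SRN}=1$) every matching term has $\la$-degree $\le p\SRN$, and for $\SRN$ odd (${\rm e}_{\SRN}=0$) every matching term has $\la$-degree $\le p\SRN-1<p\SRN$. For the two $p$-cycle terms I would invoke $\prod_{r=0}^{p-1}(1-xq^{r})=1-x^{p}$ to obtain
\begin{equation*}
\prod_{r=0}^{p-1}{\tt a}(q^{r}\la)=\prod_{n=1}^{\SRN}\frac{\kappa_{n}^{p}\xi_{n}^{p}}{i^{p}}\,\Lambda^{-1}\Bigl(1-i^{p}\epsilon_{0}\,\frac{\kappa_{n}^{p}}{\xi_{n}^{p}}\,\Lambda\Bigr)\Bigl(1-i^{p}\epsilon_{0}\,\frac{1}{\kappa_{n}^{p}\xi_{n}^{p}}\,\Lambda\Bigr),\qquad\epsilon_{0}:=q^{-p/2}=(-1)^{p'/2},
\end{equation*}
a Laurent polynomial in $\Lambda$ of degree $\SRN$ with leading coefficient $(-1)^{\SRN}i^{-p\SRN}\prod_{n}\kappa_{n}^{p}/\prod_{n}\xi_{n}^{p}$ (using $i^{2p}=-1$, $\epsilon_{0}^{2}=1$); and since ${\tt d}(\la)=q^{\SRN}{\tt a}(-q\la)$ gives $\prod_{r}{\tt d}(q^{r}\la)=\prod_{r}{\tt a}(q^{r}(-\la))$, which is the same polynomial with $\Lambda\mapsto-\Lambda$, its leading coefficient is $i^{-p\SRN}\prod_{n}\kappa_{n}^{p}/\prod_{n}\xi_{n}^{p}$. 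Therefore the coefficient of $\Lambda^{\SRN}$ in $-\prod_{r}{\tt a}(q^{r}\la)-\prod_{r}{\tt d}(q^{r}\la)$ equals $-\bigl(1+(-1)^{\SRN}\bigr)i^{-p\SRN}\prod_{n}\kappa_{n}^{p}/\prod_{n}\xi_{n}^{p}$, which is nonzero for $\SRN$ even but vanishes for $\SRN$ odd. Combining: for $\SRN$ even $\det\nolimits_{p}D$ has $\Lambda$-degree $\le\SRN=\SRN+{\rm e}_{\SRN}-1$; for $\SRN$ odd the coefficient of $\la^{p\SRN}$ in $\det\nolimits_{p}D$ is $0$ (no matching term reaches that $\la$-degree and the two $p$-cycle terms cancel there), so, as $\det\nolimits_{p}D$ involves only powers of $\la^{p}$, its $\Lambda$-degree is $\le\SRN-1=\SRN+{\rm e}_{\SRN}-1$. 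The bound on the most negative power is obtained by the mirror computation at $\Lambda^{-\SRN}$ (or from the evenness), completing the proof.

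The step I expect to be the main obstacle is precisely this cancellation for $\SRN$ odd: a crude estimate overshoots the claimed degree by one unit in $\Lambda$, so one genuinely has to compute the leading coefficients of $\prod_{r}{\tt a}(q^{r}\la)$ and $\prod_{r}{\tt d}(q^{r}\la)$ — via $\prod_{r}(1-xq^{r})=1-x^{p}$ — and see that they cancel exactly in the odd case, using along the way that no matching term reaches $\la$-degree $p\SRN$ (which is where $2|S|\le p-1$ and the divisibility of the exponents by $p$ are used). Verifying that $\det\nolimits_{p}D$ is even in $\Lambda$ is the other point where the explicit structure of ${\tt a},{\tt d}$ (through \cite{NT}) is needed; the permutation expansion and the $\la\mapsto q\la$ invariance are routine.
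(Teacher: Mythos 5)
Your proposal establishes the same three facts as the paper---dependence on $\Lambda$ only, evenness, and the degree bound with its one-unit drop for $\SRN$ odd---but by a more explicit route. The paper also starts from $\det_{p}D(q\la)=\det_{p}D(\la)$ (conjugation by the cyclic shift), but instead of the full Leibniz expansion over matchings of the cycle graph it uses the partial expansion \rf{detD-exp} into the two cyclic products plus three \emph{tridiagonal} minors $D_{1,1}$, $D_{(1,2),(1,2)}$, $D_{(1,2l+1),(1,2l+1)}$; evenness is then deduced from the gauge-invariance Lemma \ref{Det-tridiag} applied to those minors, and the degree drop for $\SRN$ odd from the limit $\lim\Lambda^{\pm \SRN}\det_{p}D(\Lambda)$, which reduces to $\det_{p}\Vert x\delta_{h,k+1}+y\delta_{h,k-1}\Vert=x^{p}+y^{p}$ evaluated with $q^{p}=1$ and $(-1)^{p}=-1$. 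That is exactly the cancellation you exhibit through the leading coefficients of $\prod_{r}{\tt a}(q^{r}\la)$ and $\prod_{r}{\tt d}(q^{r}\la)$ computed via $\prod_{r}(1-xq^{r})=1-x^{p}$, and your observation that every matching term loses at least one unit of $\la$-degree when ${\rm e}_{\SRN}=0$ (because $2|S|\le p-1$) is the reason only the two $p$-cycle terms need to be examined. Your combinatorial classification of the contributing permutations and the resulting signs are correct; in particular the two cyclic products enter with the sign $-\prod{\tt a}-\prod{\tt d}$, as one checks for $p=3$ (the displayed expansion \rf{detD-exp} carries the opposite sign on those two terms, which is immaterial for the lemma). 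So on the substance your argument is a valid, self-contained alternative to the paper's cofactor bookkeeping.

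The one step that would fail as written is your treatment of evenness in $\Lambda$. The substitution $\la\mapsto\la^{-1}$ sends $\Lambda$ to $\Lambda^{-1}$, not to $-\Lambda$, and $t(\la^{-1})=t(\la)$ is false for a general even Laurent polynomial $t$; so that route proves a different (and here irrelevant) symmetry. The symmetry you need is $\la\mapsto-\la$, which gives $\Lambda\mapsto-\Lambda$ precisely because $p$ is odd, and it follows in two lines from your own expansion: $t(-q^{s}\la)=t(q^{s}\la)$ since $t$ is even in $\la$; each matched pair is invariant because ${\tt a}(-q^{r+1}\la)\,{\tt d}(-q^{r}\la)=q^{-\SRN}{\tt d}(q^{r}\la)\cdot q^{\SRN}{\tt a}(q^{r+1}\la)$ by the relation ${\tt d}(\la)=q^{\SRN}{\tt a}(-q\la)$; and the two cycle products swap, $\prod_{r}{\tt a}(-q^{r}\la)=\prod_{r}{\tt d}(q^{r}\la)$ and conversely. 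Hence $\det_{p}D(-\la)=\det_{p}D(\la)$, which together with the fact that only powers of $\la^{p}$ occur gives evenness in $\Lambda$. With that replacement (and your ``mirror'' computation at $\Lambda^{-\SRN}$ spelled out, since evenness in $\Lambda$ does not by itself bound the negative powers) the proof is complete.
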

\begin{proof}
Let us start observing that $D$$(\lambda q)$ is obtained by $D(\lambda )$ exchanging the first and\ $p$-th column and after the first and
\ $p$-th row, so that 
\begin{equation}
\det_{p}\text{$D$}(\lambda q)=\det_{p}\text{$D$}(\lambda )%
\text{ \ \ }\forall \lambda \in \mathbb{C},
\end{equation}
which implies that $\det_{p}D$ is function of $\Lambda $.
Let us develop the determinant:
\begin{eqnarray}\label{detD-exp}
\det_{p}D(\Lambda ) &=&\prod_{h=1}^{p}{\tt a}(\lambda
q^{h})+\prod_{h=1}^{p}{\tt a}(-\lambda q^{h})-q^{\SRN}{\tt a}(\lambda ){\tt a}(-\lambda
)\det_{2l-1}D_{(1,2l+1),(1,2l+1)}(\lambda )  \notag \\
&&-q^{\SRN}{\tt a}(\lambda q){\tt a}(-\lambda q)\det_{2l-1}D%
_{(1,2),(1,2)}(\lambda )+t(\lambda )\det_{2l}D_{1,1}(\lambda
)\text{,}
\end{eqnarray}
where $D_{(h,k),(h,k)}(\lambda )$ denotes the $(2l-1)\times (2l-1)$
sub-matrix of $D(\lambda )$ obtained removing the rows and columns $h$ and $k$ while $D_{h,k}(\lambda )$ denotes the $2l\times 2l$
sub-matrix of $D(\lambda )$ obtained removing the row $h$ and
column $k$. The interest toward this decomposition of $\det_{p}$$D%
(\Lambda )$ is due to the fact that the matrices $D%
_{(1,2),(1,2)}(\lambda )$, $D_{(1,2l+1),(1,2l+1)}(\lambda )$ and 
$D_{1,1}(\lambda )$ are \textit{tridiagonal} matrices. Following the same reasoning used in Lemma \ref{Ap1} to prove that $\det_{2l}$$D_{1,1}(\lambda )$ is an even function of $\lambda$ we can also show that this is true for  $\det_{2l-1}$$D_{(1,2),(1,2)}(\lambda )$ and $\det_{2l-1}$$D_{(1,2l+1),(1,2l+1)}(\lambda )$. From the parity of these functions the parity of $\det_{p}$$D(\Lambda)$ follows by using \rf{detD-exp}.

Being ${\tt a}(\lambda )$, ${\tt d}(\lambda )$ and $t(\lambda )$\ Laurent polynomial of
degree $\SRN$ in $\lambda $, in the case of $\SRN$\ even the statement of the
lemma is already proven; so we have just to show that: 
\begin{equation}
\lim_{\log \Lambda \rightarrow \mp \infty }\Lambda ^{\pm \SRN}\det_{p}\text{%
$D$}(\Lambda )=0
\end{equation}%
for $\SRN$ odd which follows observing that: 
\begin{equation}
\lim_{\log \Lambda \rightarrow \mp \infty }\Lambda ^{\pm \SRN}\det_{p}\text{%
$D$}(\Lambda )=i^{\pm pN}\prod_{n=1}^{\SRN}\kappa _{n}^{p}\xi _{n}^{\pm
p}\det_{p}\left\Vert q^{-(1\mp 1)\SRN/2}\delta _{h,k+1}-q^{(1\mp 1)\SRN/2}\delta
_{h,k-1}\right\Vert .
\end{equation}
\end{proof}
The interest toward the function $\det_{p}D(\Lambda )$ is due to
the fact that it allows the following characterization of the $\ST$-spectrum: 
\begin{lem}
\label{Charact-Sigma1}$\Sigma _{\ST}$ is the set of all the functions $%
t(\lambda )\in \mathbb{C}[\lambda ^{2},\lambda ^{-2}]_{(\SRN+\text{e}_{\SRN}-1)/2}$ which
satisfy the system of equations:%
\begin{equation}
\det_{p}\text{$D$}(\eta^{p} _{a})=0\text{ \ \ }\forall a\in \{1,...,[\SRN]\} \  \ and \  \ (\eta _{1},...,\eta _{[\SRN]})\in 
\mathbb{B}_{\SRN},  \label{compatibility}
\end{equation}%
plus in the case of $\SRN$ even:%
\begin{equation}
\lim_{\log \Lambda \rightarrow \mp \infty }\Lambda ^{\pm \SRN}\det_{p}\text{%
$D$}(\Lambda )=0.  \label{asymp-compatibility}
\end{equation}
\end{lem}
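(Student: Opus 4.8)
The plan is to obtain Lemma~\ref{Charact-Sigma1} directly from the SOV reduction recalled in Section~\ref{SOV}. By \cite{NT}, a $t(\la)\in\BC[\la^{2},\la^{-2}]_{(\SRN+\en-1)/2}$ lies in $\Sigma_{\ST}$ exactly when the discrete system \rf{SOVBax1} — for $\SRN$ even together with $\ST_\SRN^{+}\Psi_{\pm k}=q^{\pm k}\Psi_{\pm k}$ — has a nonzero solution $\Psi$ on $\BB_\SRN$. The decisive observation is that this system is \emph{separated}: the equation labelled by $r$ involves only the dependence of $\Psi$ on the single variable $\eta_r$, its coefficients $t(\eta_r),{\tt a}(\eta_r),{\tt d}(\eta_r)$ being functions of $\eta_r$ alone. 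Consequently a nonzero $\Psi$ exists if and only if, for every $r\in\{1,\dots,[\SRN]\}$, the cyclic three-term recursion $t(q^{j}\zeta_r)Q_j={\tt a}(q^{j}\zeta_r)Q_{j-1}+{\tt d}(q^{j}\zeta_r)Q_{j+1}$, with $j\in\BZ_p$, has a nonzero solution $(Q_0,\dots,Q_{2l})$. For ``$\Leftarrow$'' one takes $\Psi(\eta)=\prod_{a}Q^{(a)}(\eta_a)$, multiplied in the even case by a nonzero solution of $\ST_\SRN^{+}\phi=q^{\pm k}\phi$ (which exists because $q^{p}=1$); for ``$\Rightarrow$'' one restricts a nonvanishing $\Psi$ to the coordinate line through a point at which $\Psi\neq0$.

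Next I would note that the $r$-th cyclic recursion is nothing but the linear system $D(\zeta_r)(Q_0,\dots,Q_{2l})^{\mathsf t}=0$ with $D$ the matrix \rf{D-matrix}, so it has a nonzero solution precisely when $\det_{p}D(\zeta_r)=0$; by Lemma~\ref{detD} this depends only on $\zeta_r^{p}=\eta_r^{p}$, which is exactly condition \rf{compatibility}. For $\SRN$ odd, where $[\SRN]=\SRN$, this completes the proof (and is consistent with the lemma's omission of an asymptotic requirement, since Lemma~\ref{detD} there gives $\det_{p}D$ of $\Lambda$-degree $\SRN-1$, so $\Lambda^{\pm\SRN}\det_{p}D\to0$ automatically).

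For $\SRN$ even, $[\SRN]=\SRN-1$, and it remains to show that, assuming \rf{compatibility}, the condition \rf{asymp-compatibility} is equivalent to $t$ lying in some $\Sigma_{\ST}^{k}$ with $k\in\{0,\dots,l\}$, i.e.\ to the leading and trailing coefficients of $t$ having the $\Theta$-graded form $(\prod_{a}\kappa_a\xi_a^{\pm1}/i)(q^{k}+q^{-k})$. Here lies the real work: one computes the extreme $\Lambda$-coefficients of $\det_{p}D(\Lambda)$ from the expansion \rf{detD-exp}, using the explicit leading and trailing coefficients of ${\tt a}$, the relation ${\tt d}(\la)=q^{\SRN}{\tt a}(-q\la)$ and $t(\la)\sim T_{+}\la^{\SRN}$ as $\la\to\infty$, together with the continuant/parity structure of the tridiagonal blocks $D_{1,1}$, $D_{(1,2),(1,2)}$, $D_{(1,2l+1),(1,2l+1)}$ already exploited in Lemma~\ref{Ap1}; the outcome is that the top coefficient is an explicit (Chebyshev-type) polynomial $\Phi_{+}(T_{+})$ of degree $p$ whose zero set is exactly $\{(\prod_{a}\kappa_a\xi_a^{-1}/i)(q^{k}+q^{-k}):k=0,\dots,l\}$, and likewise for the bottom coefficient. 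Thus \rf{asymp-compatibility} forces $T_{+}=(\prod_{a}\kappa_a\xi_a^{-1}/i)(q^{k_1}+q^{-k_1})$ for some $k_1\le l$; feeding this $k_1$ into the construction above yields a nonzero $\Psi$ which, by the SOV reduction of \cite{NT}, is a simultaneous eigenvector of $\ST(\la)$ and $\Theta$ with $\Theta$-eigenvalue $q^{\pm k_1}$, whence the asymptotics \rf{asymptotics-t} force the trailing coefficient of $t$ to carry the \emph{same} $k_1$, so that $t\in\Sigma_{\ST}^{k_1}\subseteq\Sigma_{\ST}$. The reverse implication (if $t\in\Sigma_{\ST}^{k}$ then \rf{asymp-compatibility} holds) is immediate from the same coefficient computation; in particular the two halves of \rf{asymp-compatibility} turn out to be equivalent modulo \rf{compatibility}.

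The separation argument and the passage to $\det_{p}D$ are routine; the genuine obstacle is the asymptotic computation in the even-chain case — above all identifying the zero set of the extreme-coefficient polynomial with the $\Theta$-graded values, which is precisely what ties the supplementary SOV variable $\eta_\SRN$ to the condition \rf{asymp-compatibility}.
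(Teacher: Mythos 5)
Your proof is correct and follows essentially the same route as the paper's: there, too, \rf{compatibility} is obtained as the non-triviality condition for the separated system \rf{SOVBax1} (each cyclic three-term recursion being the linear system $D(\eta_r)Q=0$), and \rf{asymp-compatibility} comes from evaluating $\lim_{\log\Lambda\to\mp\infty}\Lambda^{\pm\SRN}\det_{p}D(\Lambda)$ as the circulant tridiagonal determinant $\det_{p}\Vert q^{(1\mp1)\SRN/2}\delta_{i,j-1}+q^{-(1\mp1)\SRN/2}\delta_{i,j+1}-(q^{k}+q^{-k})\delta_{i,j}\Vert$, whose vanishing precisely at the $\Theta$-graded values is your statement about the zero set of $\Phi_{\pm}$. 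You are merely more explicit than the paper, which records this in a few lines and spells out only the forward inclusion for the even-$\SRN$ asymptotic condition, leaving the converse implicit.
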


\begin{proof}
The requirement that the system of equations \rf{SOVBax1} admits a non-zero solution leads to the equations (\ref{compatibility}), while the equation (\ref{asymp-compatibility}) for even $\SRN$
simply follows by observing that:
\begin{align}
\lim_{\log \Lambda \rightarrow \mp \infty }\Lambda ^{\pm \SRN}\det_{p}\text{%
$D$}(\Lambda )& =\det_{p}\left\Vert q^{(1\mp 1)\SRN/2}\delta
_{i,j-1}+q^{-(1\mp 1)\SRN/2}\delta _{i,j+1}-(q^{k}+q^{-k})\delta
_{i,j}\right\Vert   \notag \\
& \times (-1)\prod_{n=1}^{\SRN}\left( i\kappa _{n}\xi _{n}^{\pm }\right) ^{p}\left.
=\right. 0.
\end{align}%
\end{proof}

Note that the above characterization of the $\ST$-spectrum $\Sigma _{\ST}$
requires as input the knowledge of $\mathbb{B}_{\SRN}$, i.e. the lattice of
zeros of the operator $B(\lambda )$. It is so interesting to notice that
this characterization has in fact a reformulation which is independent from
the knowledge of $\mathbb{B}_{\SRN}$. To explain this let us note that
Lemma \ref{detD} allows to introduce the following map:
\begin{equation}
\mathcal{D}_{p,\SRN}:t(\lambda )\in \mathbb{C}[\lambda ^{2},\lambda
^{-2}]_{(\SRN+\text{e}_{\SRN}-1)/2}\rightarrow \mathcal{D}_{p,\SRN}(t(\lambda ))\equiv \det_{p}
\text{$D$}(\Lambda )\in \mathbb{C}[\Lambda ^{2},\Lambda
^{-2}]_{(\SRN+\text{e}_{\SRN}-1)/2}.
\end{equation}
In terms of this map we can introduce a further characterization of the
spectrum of the transfer matrix $\ST(\lambda )$.

\begin{thm}
The spectrum $\Sigma _{\ST}$ of the transfer matrix $\ST(\lambda )$ coincides
with the kernel $\mathcal{N}_{\mathcal{D}_{p,\SRN}}\subset \mathbb{C}[\lambda
^{2},\lambda ^{-2}]_{(\SRN+\text{e}_{\SRN}-1)/2}$ of the map $\mathcal{D}_{p,\SRN}$.
\end{thm}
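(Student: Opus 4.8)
The plan is to show the two inclusions $\Sigma_{\ST}\subseteq\CN_{\CD_{p,\SRN}}$ and $\CN_{\CD_{p,\SRN}}\subseteq\Sigma_{\ST}$, using Lemma~\ref{Charact-Sigma1} as the bridge. The content of the theorem over that lemma is precisely that the \emph{pointwise} vanishing conditions \rf{compatibility} (on the lattice $\BB_\SRN$), together with \rf{asymp-compatibility} in the even case, already force the Laurent polynomial $\CD_{p,\SRN}(t(\la))=\det_p D(\Lambda)$ to be \emph{identically} zero; equivalently, that $t(\la)\in\Sigma_{\ST}$ iff $\det_p D(\Lambda)\equiv 0$. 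So the heart of the argument is a degree/counting argument: an even Laurent polynomial in $\Lambda=\la^p$ of the stated maximal degree that vanishes on sufficiently many points must vanish identically.

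First I would fix $t(\la)\in\CN_{\CD_{p,\SRN}}$, i.e. $\det_p D(\Lambda)\equiv 0$. Then \rf{compatibility} and \rf{asymp-compatibility} hold trivially, so by Lemma~\ref{Charact-Sigma1} we get $t(\la)\in\Sigma_{\ST}$; this inclusion is immediate. For the reverse inclusion, take $t(\la)\in\Sigma_{\ST}$. By Lemma~\ref{Charact-Sigma1}, $\det_p D(\eta_a^p)=0$ for all $a\in\{1,\dots,[\SRN]\}$ and all choices $(\eta_1,\dots,\eta_{[\SRN]})\in\BB_\SRN$. By Lemma~\ref{detD}, $f(\Lambda)\equiv\det_p D(\Lambda)$ is an even Laurent polynomial of maximal degree $\SRN+\mathrm{e}_\SRN-1$ in $\Lambda$. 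The key point is that as $(\eta_1,\dots,\eta_{[\SRN]})$ ranges over $\BB_\SRN$ and $a$ over $\{1,\dots,[\SRN]\}$, the quantities $\eta_a^p$ take a large set of \emph{distinct} values in $\BC$: each $\zeta_j$ contributes $\zeta_j^p$ (the factor $q^{k_j}$ disappears upon raising to the $p$-th power since $q^p=1$), and for generic parameters the $[\SRN]$ values $\zeta_1^p,\dots,\zeta_{[\SRN]}^p$ are distinct and, moreover, come in the pairs relevant to the even structure. Counting: an even Laurent polynomial of degree $d$ in $\Lambda$, i.e. an element of $\BC[\Lambda^2,\Lambda^{-2}]_d$, is determined by $2d+1$ coefficients; equivalently, writing it as $\Lambda^{-2d}P(\Lambda^2)$ with $P$ an ordinary polynomial of degree $2d$, it has at most $2d$ zeros in $\Lambda^2$ (hence is controlled by $2d$ values of $\Lambda^2$, or $d$ unordered pairs $\{\Lambda,-\Lambda\}$). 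With $d=(\SRN+\mathrm{e}_\SRN-1)/2$ we have $2d=\SRN+\mathrm{e}_\SRN-1$, so vanishing at $\SRN+\mathrm{e}_\SRN-1$ distinct values of $\Lambda^2$ (respectively $\SRN+\mathrm{e}_\SRN-1$ beyond the constraints already imposed) forces $f\equiv 0$; in the even case the extra constraint \rf{asymp-compatibility} supplies the missing leading/trailing coefficient that the $[\SRN]=\SRN-1$ points from $\BB_\SRN$ do not reach. Thus $\det_p D(\Lambda)\equiv 0$, i.e. $t(\la)\in\CN_{\CD_{p,\SRN}}$.

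I would organize this as: (i) invoke Lemma~\ref{detD} for the codomain and parity of $\CD_{p,\SRN}$; (ii) the easy inclusion $\CN_{\CD_{p,\SRN}}\subseteq\Sigma_{\ST}$ via Lemma~\ref{Charact-Sigma1}; (iii) for $\Sigma_{\ST}\subseteq\CN_{\CD_{p,\SRN}}$, combine the pointwise vanishing on $\{\eta_a^p\}$ with the dimension count, treating $\SRN$ odd and $\SRN$ even separately, the even case using \rf{asymp-compatibility} to account for the one extra degree of freedom; (iv) conclude $\det_p D(\Lambda)\equiv 0$.

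The main obstacle I anticipate is step (iii): one must check carefully that the set of $p$-th powers $\{\eta_a^p : a\in\{1,\dots,[\SRN]\},\ (\eta_1,\dots,\eta_{[\SRN]})\in\BB_\SRN\}$ contains enough \emph{distinct} values of $\Lambda^2$ — namely at least $\SRN+\mathrm{e}_\SRN-1$ of them in the even case when combined with the asymptotic constraint, and the right count in the odd case — to pin down an even Laurent polynomial of degree $(\SRN+\mathrm{e}_\SRN-1)/2$. This rests on the genericity of the SOV parameters $\zeta_j$ established in \cite{NT} (the map fixing $\eta$ in terms of $\kappa,\xi$), and on the fact that $q^p=1$ collapses the $q^{k_j}$-orbit to a single $p$-th power, so that $\BB_\SRN$ — despite being a large lattice — projects onto exactly the $[\SRN]$-tuple $(\zeta_1^p,\dots,\zeta_{[\SRN]}^p)$ at the level of $p$-th powers. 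Making the distinctness and the pairing-into-$\{\Lambda,-\Lambda\}$ precise, and confirming the off-by-one bookkeeping between $[\SRN]=\SRN-\en$ and $\SRN+\mathrm{e}_\SRN-1$ in both parities, is where the real care is needed; the rest is formal.
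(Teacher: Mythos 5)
Your proposal follows essentially the same route as the paper's proof: the inclusion $\mathcal{N}_{\mathcal{D}_{p,\SRN}}\subset\Sigma_{\ST}$ is immediate from Lemma \ref{Charact-Sigma1}, and the reverse inclusion is obtained by combining the pointwise vanishing at the $\eta_a^p$ (plus, for $\SRN$ even, the asymptotic condition) with the degree bound of Lemma \ref{detD}. The one correction needed is the root count: a Laurent polynomial in $\Lambda^2$ of degree $d=(\SRN+{\rm e}_{\SRN}-1)/2$ must vanish at $2d+1=\SRN+{\rm e}_{\SRN}$ distinct points (not $2d$) to be forced to zero identically, and this is exactly the number supplied by the $[\SRN]$ lattice values together with the two asymptotic constraints in the even case.
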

\begin{proof}
The inclusion $\mathcal{N}_{\mathcal{D}_{p,\SRN}}\subset \Sigma _{\ST}$ is
trivial thanks to Lemma \ref{Charact-Sigma1}, vice-versa if $t(\lambda )\in
\Sigma _{\ST}$ then the function $\det_{p}$$D$$(\Lambda )$ is zero in $%
\SRN+$e$_{\SRN}$ different values of $\Lambda ^{2}$ which thanks to
Lemma \ref{detD} implies $\det_{p}$$D$$(\Lambda )\equiv 0$, i.e. $\Sigma
_{\ST}\subset \mathcal{N}_{\mathcal{D}_{p,\SRN}}$.
\end{proof}
That is the set of eigenvalues of the transfer matrix $\ST(\lambda )$ is exactly characterized as the subset of $\mathbb{C}[\lambda^{2},\lambda ^{-2}]_{(\SRN+\text{e}_{\SRN}-1)/2}$ which contains all the solutions of the functional equation
$\det_{p}D(\Lambda )=0$. In the next section we will show that this functional equation is nothing else that the Baxter
equation.

{\bf Remark 1.}  \ Let us note that the same kind of functional equation $\det D(\Lambda )=0$ also appears in \cite{BR89, Ne02, Ne03}. There it recasts, in a compact form, the functional relations which result from the truncated fusions of transfer matrix eigenvalues. It is so relevant to point out that for the {\it BBS-model}\footnote{The BBS-model \cite{BS, Ba89, BBP90, Ba04} has been analyzed in the SOV approach in a series of works \cite{GIPS06, GIPST07, GIPST08}.} in the SOV representation the non-triviality condition of the solutions of the system of Baxter-like equations has been shown \cite{GIPS06} to be equivalent to the truncation identity in the fusion of transfer matrix eigenvalues.

\section{Baxter functional equation}\label{Bax-FE}

The main consequence of the previous analysis is that it naturally leads to the complete characterization of the transfer matrix spectrum in terms of polynomial solutions of the Baxter functional equation.

\begin{thm}\label{Derivation-Baxter-functional}
Let $t(\lambda )\in \Sigma _{\ST}$ then $t(\lambda )$\ defines uniquely up to
normalization a polynomial $Q_{t}(\lambda )$ that satisfies the Baxter functional equation: 
\begin{equation}
t(\lambda )Q_{t}(\lambda )={\tt a}(\lambda )Q_{t}(\lambda q^{-1})+{\tt d}(\lambda
)Q_{t}(\lambda q)\ \ \ \ \forall \lambda \in \mathbb{C}.  \label{tq-Baxter}
\end{equation}
\end{thm}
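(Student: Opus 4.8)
The plan is to reconstruct $Q_t$ from a vector in the kernel of the matrix $D(\la)$ of \rf{D-matrix}. By the theorem just proved, the hypothesis $t(\la)\in\Sigma_\ST$ is equivalent to $\det_{p}D(\Lambda)\equiv 0$, so $D(\la)$ is singular for \emph{every} $\la\in\BC^*$; moreover, writing the homogeneous system $D(\la)\vec{x}=0$ row by row one obtains exactly
\[
t(q^{r-1}\la)\,x_r\;=\;{\tt a}(q^{r-1}\la)\,x_{r-1}+{\tt d}(q^{r-1}\la)\,x_{r+1}\,,\qquad r=1,\dots,p\,,
\]
with indices read modulo $p$. That is, $D(\la)\vec{x}=0$ is precisely the Baxter equation \rf{tq-Baxter} evaluated along the $q$-orbit $\{q^{k}\la\}_{k=0}^{2l}$, the relation $q^{p}=1$ supplying the periodic closure. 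The task thus splits into: (i) producing, for almost every $\la$, a vector in $\ker D(\la)$; (ii) gluing these into a single function $Q_t$ on $\BC^*$ solving \rf{tq-Baxter}; (iii) showing $Q_t$ may be chosen to be a polynomial of the asserted degree; and (iv) uniqueness up to a constant.

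For (i)--(ii) I would first check that $\mathrm{rank}\,D(\la)=p-1$ for all but finitely many $\la$, which holds as soon as one $(p-1)\times(p-1)$ minor is a non-vanishing Laurent polynomial; the tridiagonal minor $\det_{2l}D_{1,1}(\la)$ is the natural choice, its parity being controlled as in Lemma \ref{detD} and its leading behaviour as $\log\la\to\pm\infty$ being read off from the explicit ${\tt a}$ and ${\tt d}$. Hence $\ker D(\la)$ is one-dimensional generically, and I take as its generator the first column of the adjugate, $w_r(\la)=C_{1r}(\la)$, whose entries are Laurent polynomials in $\la$. The structural input I would exploit is the covariance $D(q\la)=\Pi^{-1}D(\la)\,\Pi$, with $\Pi$ the cyclic shift, which is immediate from \rf{D-matrix} and underlies the $\det$-invariance used in the proof of Lemma \ref{detD}. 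It gives $\mathrm{adj}\,D(q\la)=\Pi^{-1}\,\mathrm{adj}\,D(\la)\,\Pi$; since $\mathrm{adj}\,D(\la)$ has rank one on the generic set, this forces $w_r(q\la)=\rho(\la)\,w_{r+1}(\la)$ for a scalar $\rho(\la)$ (a ratio of Laurent polynomials), and running once around the orbit, using $\Pi^{p}=\Id$, yields the cyclic identity $\prod_{k=0}^{p-1}\rho(q^{k}\la)=1$.

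I would then \emph{define} $Q_t(\la):=g(\la)\,\det_{2l}D_{1,1}(\la)$, where $g$ solves the first-order $q$-difference equation $g(\la)=\rho(\la)\,g(q\la)$ --- a solution exists exactly because of the cyclic identity above. Telescoping the relation $w_r(\la)=w_1(q^{r-1}\la)\big/\prod_{k=0}^{r-2}\rho(q^{k}\la)$ one checks that the $r$-th entry of the (still kernel) vector $g(\la)\,(w_1(\la),\dots,w_p(\la))$ equals $Q_t(q^{r-1}\la)$, so that $Q_t$ is a well-defined function on $\BC^*$ which, by construction, solves \rf{tq-Baxter}. For (iii), the point is that the only possible poles of $Q_t$ are those introduced by the integrating factor $g$, and these can only lie at the zeros of ${\tt a}$ and ${\tt d}$; evaluating \rf{tq-Baxter} at such a zero relates two consecutive values of $Q_t$ on a $q$-orbit and shows that the would-be poles cancel, so that $Q_t$ is a genuine Laurent polynomial; feeding the asymptotics \rf{asymptotics-t} of $\ST(\la)$ (Proposition 6 of \cite{NT}) through \rf{tq-Baxter} then fixes the leading and trailing powers, and after a monomial normalization $Q_t$ is an honest polynomial of the claimed degree. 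For (iv), one-dimensionality of $\ker D(\la)$ implies that any polynomial solution $\tilde{Q}_t$ of \rf{tq-Baxter} produces, at each generic $\la$, a kernel vector proportional to the one coming from $Q_t$; the proportionality factor is constant along each $q$-orbit, hence a ratio of Laurent polynomials in $\Lambda=\la^{p}$, and comparing degrees forces it to be a global constant.

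The main obstacle is step (iii): turning the a priori transcendental (because of $g$) glued function into an honest polynomial of the right degree. The gluing in (ii) is mere bookkeeping once $\prod_{k}\rho(q^{k}\la)=1$ is available, and (iv) is the standard one-dimensional-kernel argument; but showing that the poles of $g$ are cancelled exactly by the zeros of $\det_{2l}D_{1,1}(\la)$ requires genuinely using the explicit zero structure of ${\tt a}(\la)$ and ${\tt d}(\la)$ together with the asymptotics \rf{asymptotics-t} --- precisely the model-dependent data that in \cite{NT} was supplied by the direct construction of $\SQ(\la)$.
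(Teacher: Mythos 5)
Your skeleton is the paper's: from $t\in\Sigma_{\ST}$ and Lemma \ref{detD} one gets $\mathrm{rank}\,D(\lambda)=2l$, the cofactor vector $(\text{\textsc{C}}_{1,1}(\lambda),\dots,\text{\textsc{C}}_{1,2l+1}(\lambda))$ generates $\ker D(\lambda)$, and the shift covariance you write as $\mathrm{adj}\,D(q\lambda)=\Pi^{-1}\mathrm{adj}\,D(\lambda)\Pi$ is exactly the identity \rf{cofactors-diagonal}, which combined with rank one of the cofactor matrix gives \rf{Inter-step} and \rf{Bax-eq}. The genuine gap is in your step (iii), which is the actual content of the theorem. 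Your $\rho(\lambda)$ equals $\text{\textsc{C}}_{2,1}(\lambda)/\text{\textsc{C}}_{1,1}(\lambda)$, so the divisor of any rational integrating factor $g$ is supported on the $q$-orbits of the zeros of $\text{\textsc{C}}_{1,1}$ and $\text{\textsc{C}}_{2,1}$, not on the zeros of ${\tt a}$ and ${\tt d}$; evaluating the Baxter relation at zeros of ${\tt a}$, ${\tt d}$ therefore does not show that $g\cdot\text{\textsc{C}}_{1,1}$ is a polynomial. In addition, $\prod_{k}\rho(q^{k}\lambda)=1$ only lets you solve for $g$ orbit by orbit; to obtain a single rational $g$ on $\mathbb{C}^{*}$ you would need a Hilbert-90 type argument for the norm-one cocycle $\rho$, which you do not supply.

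The paper closes precisely this hole concretely: Lemma \ref{cofactors-zeros} uses \rf{Inter-step}, \rf{Bax-eq} and the explicit fact that ${\tt a}(\lambda)$ has no zero in common with $\prod_{h=0}^{2l-1}{\tt d}(\lambda q^{h})$ (and symmetrically) to prove $Z_{\text{\textsc{C}}_{1,1}}\cap Z_{\text{\textsc{C}}_{1,2}}=Z_{\text{\textsc{C}}_{1,1}}\cap Z_{\text{\textsc{C}}_{1,2l+1}}$; cancelling these common factors yields \rf{proportionality-cofactor}, i.e.\ the reduced cofactors are \emph{constant} multiples of $q$-shifts of a single polynomial $\overline{\text{\textsc{C}}}_{1,1}$. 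Your $\rho$ thereby degenerates to the constant $q^{\bar{\SRN}_{1,1}}\varphi$ and $g$ to a monomial $\lambda^{a}$, but only after the consistency condition $\varphi^{p}=1$, extracted from the determinant identity \rf{detDbar}, guarantees that an integer $a$ with $q^{-a}=q^{\bar{\SRN}_{1,1}}\varphi$ exists; nothing in your write-up produces either the common-zero statement or this root-of-unity constraint. Finally, your uniqueness argument does not close: since $2l\SRN\geq p$ for $\SRN\geq2$, degree counting does not exclude a nonconstant rational proportionality factor in $\Lambda=\lambda^{p}$ (its denominator could be absorbed by zeros of $Q_{t}$ lying on full $q$-orbits); the paper instead invokes Lemma 2 of \cite{NT} for this point.
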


\begin{proof}
The fact that given a $t(\lambda )\in \mathbb{C}[\lambda ^{2},\lambda
^{-2}]_{(\SRN+\text{e}_{\SRN}-1)/2}$ there exists up to normalization at most one polynomial $%
Q_{t}(\lambda )$\ that satisfies the Baxter
functional equation has been proven in Lemma 2 of \cite{NT}.
So we have to prove only the existence of $Q_{t}(\lambda )\in \mathbb{C}[\lambda ]$. An interesting point about the proof given here is that it is a constructive proof.

Let us notice that the condition $t(\lambda )\in \Sigma _{\ST}\equiv 
\mathcal{N}_{\mathcal{D}_{p,\SRN}}$ implies that the $p\times p$ matrix $D(\lambda )$
has rank $2l$ for any $\lambda \in \mathbb{C}\backslash \{0\}$. Let us denote with
\begin{equation}\label{cofactor-def}
\textsc{C}_{i,j}(\lambda)=(-1)^{i+j}\det_{2l}D_{i,j}(\lambda)
\end{equation}
the $(i,j)$ {\it cofactor} of the matrix $D$$(\lambda)$; then the matrix formed out of these cofactors has rank $1$, i.e. all the vectors:
\begin{equation}
\text{\textsc{V}}_{i}(\lambda )\equiv (\text{\textsc{C}}_{i,1}(\lambda ),%
\text{\textsc{C}}_{i,2}(\lambda ),...,\text{\textsc{C}}_{i,2l+1}(\lambda
))^{\ST}\in \mathbb{C}^{p}\text{ \ \ }\forall i\in \{1,...,2l+1\}
\end{equation}
are proportional: 
\begin{equation}
\text{\textsc{V}}_{i}(\lambda )/\text{\textsc{C}}_{i,1}(\lambda )=\text{%
\textsc{V}}_{j}(\lambda )/\text{\textsc{C}}_{j,1}(\lambda )\text{\ \ \ \ }%
\forall i,j\in \{1,...,2l+1\},\text{ }\forall \lambda \in \mathbb{C}.
\label{covector-proport}
\end{equation} 
The proportionality (\ref{covector-proport}) of the eigenvectors \textsc{V}$_{i}(\lambda
)$ implies: 
\begin{equation}\label{proportionality}
\text{\textsc{C}}_{2,2}(\lambda )/\text{\textsc{C}}_{2,1}(\lambda )=\text{%
\textsc{C}}_{1,2}(\lambda )/\text{\textsc{C}}_{1,1}(\lambda )
\end{equation}
which, by using the property (\ref{cofactors-diagonal}), can be rewritten as:%
\begin{equation}
\text{\textsc{C}}_{1,1}(\lambda q)/\text{\textsc{C}}_{1,2l+1}(\lambda q)=%
\text{\textsc{C}}_{1,2}(\lambda )/\text{\textsc{C}}_{1,1}(\lambda ).
\label{Inter-step}
\end{equation}%
Moreover, the first element in the vectorial condition\ 
$D(\lambda )$\textsc{V}$_{1}(\lambda )=$\b{0} reads: 
\begin{equation}\label{Bax-eq}
t(\lambda )\text{\textsc{C}}_{1,1}(\lambda )={\tt a}(\lambda )\text{\textsc{C}}%
_{1,2l+1}(\lambda )+{\tt d}(\lambda )\text{\textsc{C}}_{1,2}(\lambda ).
\end{equation}%
Let us note that from the form of ${\tt a}(\lambda ),$ ${\tt d}(\lambda )$ and $%
t(\lambda )\in \Sigma _{\ST}$ it follows that all the cofactors are Laurent
polynomial of maximal degree\footnote{The $a_{i,j}$ and $b_{i,j}$ are non-negative integers
and $\la^{(i,j)}_{h}\neq 0$ for any $h\in \{1,...,4l\SRN-(a_{i,j}+b_{i,j})\}$.} $2l\SRN$ in
$\lambda$:
\begin{eqnarray}\label{general-cofactor}
\text{\textsc{C}}_{i,j}(\lambda ) &=&\text{\textsc{c}}_{i,j}\lambda
^{-2l\SRN+a_{i,j}}\prod_{h=1}^{4l\SRN-(a_{i,j}+b_{i,j})}(\lambda
_{h}^{(i,j)}-\lambda ).
\end{eqnarray}

In Lemma \ref{cofactors-zeros}, we show that the equations \rf{Inter-step} and (\ref{Bax-eq}) imply that if \textsc{C}$_{1,1}(\lambda )$ has a common zero with \textsc{C}$_{1,2}(\lambda )$ then this is also a zero of 
\textsc{C}$_{1,2l+1}(\lambda )$ and that the same statement holds exchanging \textsc{C}$_{1,2}(\lambda )$ with \textsc{C}$_{1,2l+1}(\lambda )$. So we can denote with \textsc{c}$_{1,1}\overline{\text{\textsc{C}}}_{1,1}(\lambda )$, \textsc{c}$_{1,2l+1}\overline{\text{\textsc{C}}}_{1,2l+1}(\lambda )$ and \textsc{c}$_{1,2}\overline{\text{\textsc{C}}}_{1,2}(\lambda )$ the polynomials of maximal degree $4l\SRN$ obtained simplifying the common factors in \textsc{C}$_{1,1}(\lambda )$, \textsc{C}$_{1,2l+1}(\lambda )$ and \textsc{C}$_{1,2}(\lambda )$. Then, by equation \rf{Inter-step}, they have to satisfy the relations:
\begin{equation}\label{proportionality-cofactor}
\overline{\text{\textsc{C}}}_{1,2l+1}(\lambda )=q^{\bar{\SRN}_{1,1}}\text{$\overline{\text{%
\textsc{C}}}$}_{1,1}(\lambda q^{-1}),\text{ \ \ $\overline{\text{\textsc{C}}}
$}_{1,2}(\lambda )=q^{-\bar{\SRN}_{1,1}}\text{$\overline{\text{\textsc{C}}}$}_{1,1}(\lambda q)%
\text{ \ and \ \textsc{c}}_{1,2l+1}=\varphi \text{\textsc{c}}_{1,1},
\end{equation}%
where $\varphi \equiv $\textsc{c}$_{1,1}/$\textsc{c}$_{1,2}$ and $\bar{\SRN}_{1,1}$ is the degree of the polynomial $\overline{\text{\textsc{C}}}_{1,1}(\lambda )$. So that equation
(\ref{Bax-eq}) assumes the form of a Baxter equation in the polynomial $\overline{%
\text{\textsc{C}}}_{1,1}(\lambda )$: 
\begin{equation}\label{deform-BAX}
t(\lambda )\text{$\overline{\text{\textsc{C}}}$}_{1,1}(\lambda )=\bar{{\tt a}}%
(\lambda )\text{$\overline{\text{\textsc{C}}}$}_{1,1}(\lambda q^{-1})+\bar{{\tt d}}%
(\lambda )\text{$\overline{\text{\textsc{C}}}$}_{1,1}(\lambda q),
\end{equation}%
with coefficients $\bar{{\tt a}}(\lambda )\equiv q^{\bar{\SRN}_{1,1}}\varphi {\tt a}(\lambda )$ and $\bar{{\tt d}}%
(\lambda )\equiv q^{-\bar{\SRN}_{1,1}}\varphi ^{-1}{\tt d}(\lambda )$. Note that the consistence of the above equation implies that $\varphi $ is a $p$-root of the unity. Indeed, denoting
with $\bar{D}(\Lambda )$ the matrix defined as in \rf{D-matrix} but with coefficients $%
\bar{{\tt a}}(\lambda )$ and $\bar{{\tt d}}(\lambda )$, equation \rf{deform-BAX} implies: 
\begin{equation}\label{detDbar}
0= \det_{p} \bar{D}(\Lambda )\equiv (\varphi ^{p}-1)\left( \prod_{h=1}^{p}{\tt a}(\lambda q^{h}
)-\varphi ^{-p}\prod_{h=1}^{p}{\tt a}(-\lambda q^{h})\right).
\end{equation}%
The expansion for $\det_{p} \bar{D}(\Lambda )$ in \rf{detDbar} is derived by using the expansion \rf{detD-exp} for $\det_{p}\bar{D}(\Lambda )$, the formulae\footnote{They follow from the \textit{tridiagonality} of these matrices and by using Lemma \ref{Det-tridiag}.}:
\begin{align}
& \left. \det_{2l}\overline{D}_{1,1}(\lambda )=\det_{2l}%
D_{1,1}(\lambda ),\right.  \\
& \left. \det_{2l-1}\overline{D}_{(1,2),(1,2)}(\lambda
)=\det_{2l-1}D_{(1,2),(1,2)}(\lambda ),\right.  \\
& \left. \det_{2l-1}\overline{D}_{(1,2l+1),(1,2l+1)}(\lambda
)=\det_{2l-1}D_{(1,2l+1),(1,2l+1)}(\lambda ),\right. \text{
\ }
\end{align}%
and the condition $t(\lambda )\in \Sigma _{T}$. Finally, if we define:
\begin{equation}
Q_{t}(\lambda )\equiv \lambda ^{a}\text{$\overline{\text{\textsc{C}}}$}_{1,1}(\lambda ),
\end{equation}%
where $q^{-a}=q^{\bar{\SRN}_{1,1}}\varphi$ with $a\in\{0,..,2l\}$, we get the statement of the theorem.
\end{proof}

{\bf Remark 2.} \ The previous theorem implies that for any $t(\lambda )\in \Sigma _{\ST}$ the polynomial solution $Q_{t}(\lambda )$ of the Baxter equation can be related to the determinant of a tridiagonal matrix of finite size $p-1$. Note that the spectrum of the Sine-Gordon model in the case of irrational coupling $\bar{\beta}^2$ should be deduced from $\beta^2=p'/p$ rational in the limit $\beta^2\rightarrow\bar{\beta}^2$. In particular, this implies that under this limit ($p\rightarrow + \infty$) the dimension of the representation diverges as well as the size of the tridiagonal matrix whose determinant is associated to the solution $Q_{t}(\lambda )$ of the Baxter equation. It is then relevant to point out that in the case of the quantum periodic Toda chain the solutions of the corresponding Baxter equation are expressed in terms of determinants of semi-infinite tridiagonal matrices \cite{GM,GP,KL99}. 
\vspace{0.2cm}

It is worth noticing that the set of polynomials $Q_{t}(\lambda )$, introduced in the previous theorem, admits a more precise characterization: 
\begin{thm}
Let $t(\lambda )\in \Sigma _{\ST}$ then $t(\lambda )$\ defines uniquely up to
normalization a polynomial solution $Q_{t}(\lambda )$\ of the Baxter functional equation \rf{tq-Baxter} of maximal degree $2l\SRN$.

In the case $\SRN$ odd,\ it results: 
\begin{equation}
Q_{t}(0)\equiv Q_{0}\neq 0,\text{ \ \ and \ }\lim_{\lambda \rightarrow
\infty }\lambda ^{-2l\SRN}Q_{t}(\lambda )\equiv Q_{2l\SRN}\neq 0.
\label{cond-asymp}
\end{equation}
In the case $\SRN$ even, the condition (\ref{cond-asymp}) selects $t(\lambda
)\in \Sigma _{\ST}^{0}$ while for $t(\lambda )\in \Sigma _{\ST}^{k}$ with $k\in
\{1,...,l\}$ we have the characterization $Q_{0}=Q_{2l\SRN}=0$ and: 
\begin{equation}
\lim_{\lambda \rightarrow 0}\frac{Q_{t}(\lambda q)}{Q_{t}(\lambda )}=q^{\pm
k},\text{ \ \ }\lim_{\lambda \rightarrow \infty }\frac{Q_{t}(\lambda q)}{%
Q_{t}(\lambda )}=q^{-(\SRN\pm k)}.  \label{cond-asymp+k}
\end{equation}
\end{thm}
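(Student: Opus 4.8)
The plan is to take the existence of a polynomial solution for granted — the previous theorem already produces $Q_{t}(\lambda)=\lambda^{a}\overline{\text{\textsc{C}}}_{1,1}(\lambda)$ solving \rf{tq-Baxter} for every $t\in\Sigma_{\ST}$, and Lemma~2 of \cite{NT} makes it unique up to normalization — and to concentrate on the quantitative content: the bound $\deg Q_{t}\le 2l\SRN$ and the behaviour of $Q_{t}$ at $\lambda=0,\infty$. The single device I would use throughout is to substitute $Q_{t}$ into \rf{tq-Baxter} and match leading and trailing Laurent coefficients of the two sides. One needs that ${\tt a}(\lambda),{\tt d}(\lambda)$ are Laurent polynomials spanning $\lambda$-degrees $-\SRN$ to $\SRN$, whose leading and trailing coefficients are read from the product formula for ${\tt a}$ and then for ${\tt d}$ through ${\tt d}(\lambda)=q^{\SRN}{\tt a}(-q\lambda)$ — so the ratio of the leading coefficients of ${\tt d}$ and ${\tt a}$ is $(-1)^{\SRN}q^{2\SRN}$, that of the trailing ones $(-1)^{\SRN}$ — whereas $t(\lambda)$ only spans $-(\SRN+\text{e}_{\SRN}-1)$ to $\SRN+\text{e}_{\SRN}-1$, and for $\SRN$ even its leading and trailing coefficients are $\prod_{a=1}^{\SRN}\frac{\kappa_{a}\xi_{a}^{\mp1}}{i}(q^{k}+q^{-k})$ by \rf{asymptotics-t} when $t\in\Sigma_{\ST}^{k}$.

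Writing $Q_{t}(\lambda)=\sum_{j=v}^{d}Q_{j}\lambda^{j}$ with $Q_{v},Q_{d}\ne0$, so $d=\deg Q_{t}$ and $v=\mathrm{ord}_{0}Q_{t}$, the first step is to compare in \rf{tq-Baxter} the coefficient of the top power $\lambda^{\SRN+d}$ and of the bottom power $\lambda^{-\SRN+v}$. For $\SRN$ odd the left side reaches neither, so both coefficients on the right must vanish, which with the ratios above gives $q^{2(d+\SRN)}=1$ and $q^{2v}=1$, i.e.\ $d\equiv-\SRN\equiv 2l\SRN$ and $v\equiv0\pmod p$. For $\SRN$ even with $t\in\Sigma_{\ST}^{k}$ these matchings become the identities $q^{d+\SRN}+q^{-(d+\SRN)}=q^{k}+q^{-k}$ and $q^{v}+q^{-v}=q^{k}+q^{-k}$, so $d$ and $v$ are fixed modulo $p$ up to signs, fixed in turn by the $\Theta$-eigenvalue $q^{\pm k}$ of the corresponding eigenstate (recall $\Sigma_{\ST}^{k}$ collects both); one also checks $d+v\equiv 2l\SRN\pmod p$, consistent with the signs in \rf{cond-asymp+k}.

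Next I would sharpen these congruences by a rigidity argument based on uniqueness. If some complete $q$-orbit $\{\mu,q\mu,\dots,q^{p-1}\mu\}$ consisted of zeros of $Q_{t}$, the factor $\prod_{h=0}^{p-1}(\lambda-q^{h}\mu)=\lambda^{p}-\mu^{p}$ is invariant under $\lambda\mapsto q\lambda$, so $Q_{t}(\lambda)/(\lambda^{p}-\mu^{p})$ would again solve \rf{tq-Baxter} but with strictly smaller degree, contradicting uniqueness; the same with $\mu=0$ (dividing out $\lambda^{p}$) shows $v<p$. Combined with the congruences this forces $v=0$, hence $Q_{0}=Q_{t}(0)\ne0$, whenever $\SRN$ is odd or $\SRN$ is even with $k=0$; and it forces $v$ to be the unique element of $\{1,\dots,p-1\}$ in the relevant residue class when $\SRN$ is even with $k\in\{1,\dots,l\}$, so that $Q_{0}=0$ and $\lim_{\lambda\to0}Q_{t}(\lambda q)/Q_{t}(\lambda)=q^{v}=q^{\pm k}$.

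The last step, controlling $d$, is the one I expect to be hardest. The bound $\deg Q_{t}\le 2l\SRN$ has to be squeezed out of the construction $Q_{t}=\lambda^{a}\overline{\text{\textsc{C}}}_{1,1}$: the tridiagonal shape of $D_{1,1}(\lambda)$ and the degrees of ${\tt a},{\tt d},t$ confine the $\lambda$-span of the even Laurent polynomial $\det_{2l}D_{1,1}(\lambda)$ to $[-2l\SRN,2l\SRN]$, while \rf{Inter-step}, \rf{proportionality-cofactor} and Lemma~\ref{cofactors-zeros} identify, among the roots of the genuine-polynomial representative of $\text{\textsc{C}}_{1,1}$, precisely those common to $\text{\textsc{C}}_{1,2}$ and $\text{\textsc{C}}_{1,2l+1}$, whose cancellation in $\overline{\text{\textsc{C}}}_{1,1}$ (together with the $\lambda^{a}$ prefactor not overshooting) leaves $\deg Q_{t}\le 2l\SRN$. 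When $\SRN$ is odd or $\SRN$ is even with $k=0$, this bound plus $d\equiv 2l\SRN\pmod p$ — the cofactor count being exact in these cases — gives $d=2l\SRN$, so $Q_{2l\SRN}=\lim_{\lambda\to\infty}\lambda^{-2l\SRN}Q_{t}(\lambda)\ne0$, which is \rf{cond-asymp}. When $\SRN$ is even with $k\in\{1,\dots,l\}$ one has $v\not\equiv0$, hence $d\not\equiv 2l\SRN\pmod p$ and so $d<2l\SRN$, giving $Q_{2l\SRN}=0=Q_{0}$, while $\lim_{\lambda\to\infty}Q_{t}(\lambda q)/Q_{t}(\lambda)=q^{d}=q^{-(\SRN\pm k)}$ (with the sign matching $\lim_{\lambda\to0}$ via $d\equiv 2l\SRN-v$), which together with the $\lambda\to0$ limit is \rf{cond-asymp+k}. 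The real obstacle is this sharp degree bound: the asymptotics of \rf{tq-Baxter} alone only fix $d$ modulo $p$, so getting the value $2l\SRN$ forces one to use the fine structure of the cofactor $\overline{\text{\textsc{C}}}_{1,1}$, equivalently a dimension count of the admissible $Q_{t}$-data against the $p^{\SRN}$-dimensional SOV space.
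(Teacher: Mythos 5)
Your handling of the $\lambda\to 0$ end is a genuinely different and rather elegant route: matching the trailing Laurent coefficients of \rf{tq-Baxter} to get $v\equiv 0$ (resp.\ $v\equiv\pm k$) mod $p$ for $v=\mathrm{ord}_0Q_t$, and then arguing that a full $q$-orbit of zeros (in particular a factor $\lambda^p$) could be divided out to produce a second, non-proportional polynomial solution. The paper instead reads $v=a$ directly off an explicit evaluation of $\lim_{\log\lambda\to+\infty}\lambda^{2l\SRN}\,\text{\textsc{C}}_{1,1}(\lambda)$. One caveat: since $q^p=1$, multiplication by \emph{any} polynomial in $\lambda^p$ maps solutions of \rf{tq-Baxter} to solutions, so the ``uniqueness up to normalization'' you invoke can only hold modulo such factors; your division trick is really the argument that singles out the minimal representative, and you should state it as such rather than as a contradiction with Lemma 2 of \cite{NT}.

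The genuine gap is at the $\lambda\to\infty$ end, and you have located it without closing it. Your asymptotic matching gives only $d=\deg Q_t\equiv 2l\SRN\pmod p$ (for $\SRN$ odd, or $\SRN$ even with $k=0$), and together with $d\le 2l\SRN$ this leaves $d\in\{2l\SRN,\ 2l\SRN-p,\ \dots\}$; the orbit-division trick has no analogue at infinity (a deficit of degree is not a removable factor), so nothing you write excludes $d=2l\SRN-p$. The phrase ``the cofactor count being exact in these cases'' is exactly the missing step. Moreover the upper bound $\deg Q_t\le 2l\SRN$ is itself not free: the previous theorem only bounds $\overline{\text{\textsc{C}}}_{1,1}$ by degree $4l\SRN$, and halving this requires the evenness of $\text{\textsc{C}}_{1,1}$ (formula \rf{cofactors-parity}) together with the identity \rf{cofactor-equality}, $\text{\textsc{C}}_{1,1}(\lambda q)\text{\textsc{C}}_{1,1}(\lambda)=q^{\SRN}\text{\textsc{C}}_{1,2}(\lambda)\text{\textsc{C}}_{1,2}(-\lambda)$, which pairs the roots of $\text{\textsc{C}}_{1,1}$ as $\pm\bar{\lambda}_i$, identifies the common factor with $\text{\textsc{C}}_{1,2}$ as $\prod_i(\bar{\lambda}_i+\lambda)$, and yields the explicit form \rf{Q_t-form}, $Q_t(\lambda)=\lambda^a\prod_i(\bar{\lambda}_i-\lambda)$ of degree $2l\SRN-b$. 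The paper then settles $a=b=0$, hence \rf{cond-asymp}, by evaluating $\lim_{\log\lambda\to\mp\infty}\lambda^{\pm 2l\SRN}\,\text{\textsc{C}}_{1,1}(\lambda)$ as the determinant of an explicit constant matrix, which is nonzero precisely for $\SRN$ odd and for $\SRN$ even with $k=0$. You would need either to reproduce that computation or to supply a real substitute; the ``dimension count'' you allude to is not carried out.
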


\begin{proof}
Thanks to formula (\ref{cofactors-parity}), the cofactor $\textsc{C}_{1,1}(\lambda ) \in\mathbb{C}[\lambda ,\lambda^{-1}]_{2l\SRN}$ is even in $\lambda$ and so it admits the expansions:
\begin{eqnarray}
\text{\textsc{C}}_{1,1}(\lambda ) &=&\text{\textsc{c}}_{1,1}\lambda
^{-2l\SRN+2\tilde{a}_{1,1}}\prod_{i=1}^{2l\SRN-(\tilde{a}_{1,1}+\tilde{b}_{1,1})}(\lambda
_{i}^{(1,1)}-\lambda )(\lambda _{i}^{(1,1)}+\lambda ).
\end{eqnarray}
Let us note now that by using the properties (\ref{cofactors-diagonal}) and (\ref{cofactors-parity}), the relation \rf{proportionality} can be rewritten as: 
\begin{equation}
\text{\textsc{C}}_{1,1}(\lambda q)\text{\textsc{C}}_{1,1}(\lambda )=q^{\SRN}%
\text{\textsc{C}}_{1,2}(\lambda )\text{\textsc{C}}_{1,2}(-\lambda ).
\label{cofactor-equality}
\end{equation}
Using that and the general representation \rf{general-cofactor} for the cofactor $\text{\textsc{C}}_{1,2}(\lambda )$, we get:
\begin{equation}\label{1-rel}
a_{1,2}=2\tilde{a}_{1,1}\equiv 2a\text{, \ \ }b_{1,2}=2\tilde{b}_{1,1}\equiv 2b\text{, \ \ 
\textsc{c}}_{1,2}^{2}=\text{\textsc{c}}_{1,1}^{2}q^{-2(\SRN +b)}
\end{equation}
and: 
\begin{equation}
\left( \lambda _{i}^{(1,1)}\right) ^{2}=\left( \lambda _{i}^{(1,2)}\right)
^{2}\equiv \bar{\lambda}_{i}^{2},\text{ \ }\left( \lambda
_{i+2l\SRN-(a+b)}^{(1,2)}\right) ^{2}=\left( \bar{\lambda}_{i}/q\right) ^{2}
\label{cofactor-zeros}
\end{equation}
with $\bar{\lambda}_{i}\neq 0$ for any $i\in \{1,...,2l\SRN-(a+b)\}$ with $a$ and  $b \in \mathbb{Z}^{\geq 0}$. Note that the equation \rf{1-rel} and the fact that $\varphi \equiv $\textsc{c}$_{1,1}/$\textsc{c}$_{1,2}$ is a $p$-root of the unity imply $\varphi=q^{b+\SRN}$. Then we can write: 
\begin{eqnarray}
\text{\textsc{C}}_{1,1}(\lambda ) &=&\text{\textsc{c}}\lambda
^{-2l\SRN+2a}\prod_{i=1}^{2l\SRN-(a+b)}(\bar{\lambda}_{i}+\lambda )(\bar{\lambda}%
_{i}-\lambda ),  \label{cof-11} \\
\text{\textsc{C}}_{1,2}(\lambda ) &=&q^{a}\text{\textsc{c}}%
\lambda ^{-2l\SRN+2a}\prod_{i=1}^{2l\SRN-(a+b)}(\bar{\lambda}_{i}+\lambda
)((-1)^{H(x-i)}\bar{\lambda}_{i}-\lambda q),  \label{cof-12}
\end{eqnarray}
where \textsc{c}$\equiv $\textsc{c}$_{1,1}$ and $H(n)\equiv \{0\text{ \ \ for }n<0\text{, \ \ }1\text{ \ \ for }n\geq 0\}$ is the Heaviside step function. Here, $x$ is a non-negative integer which is fixed to zero thanks to formula \rf{proportionality-cofactor}. Then the solution $Q_{t}(\lambda )$ of the Baxter equation \rf{tq-Baxter} belongs to $\mathbb{C}[\lambda ]_{2l\SRN}$ and has the form:
\begin{equation}
Q_{t}(\lambda )\equiv \lambda ^{a}\prod_{i=1}^{2l\SRN-(a+b)}(\bar{\lambda}_{i}-\lambda ).
\label{Q_t-form}
\end{equation} 
Let us show now the remaining statements of the theorem concerning the asymptotics of $Q_{t}(\lambda )$. To this aim we compute the limits: 
\begin{align}
\lim_{\log \lambda \rightarrow \mp \infty }\lambda ^{\pm 2l\SRN}\text{\textsc{C}%
}_{1,1}(\lambda )& =\det_{2l}\left\Vert q^{-(1\mp 1)\SRN/2}\delta
_{i,j+1}+q^{(1\mp 1)\SRN/2}\delta _{i,j-1}-(q^{k}+q^{-k})\delta _{\text{e}%
_{\SRN},1}\delta _{i,j}\right\Vert _{i\neq 1,j\neq 1}  \notag \\
& \times \prod_{h=1}^{\SRN}(\frac{\kappa _{h}\xi _{h}^{\pm 1}}{i})^{2l}\left.
=\right. (\delta _{\text{e}_{\SRN},1}(1+(2l+1)\delta _{k,0})-1)\prod_{h=1}^{\SRN}(%
\frac{\kappa _{h}\xi _{h}^{\mp 1}}{i})^{2l},
\end{align}
 which imply: 
\begin{equation}
a=b=0,  \label{asymp-Lem}
\end{equation}
for $\SRN$ odd and $\SRN$ even with $t(\lambda )\in \Sigma _{\ST}^{0}$, i.e. the
condition (\ref{cond-asymp}). In the remaining cases, $\SRN$ even and $t(\lambda )\notin \Sigma _{\ST}^{0}$, the same formula implies: 
\begin{equation}
a\neq 0\text{, \ }b\neq 0,
\end{equation}
so that $Q_{0}=Q_{2l\SRN}=0$, while the asymptotics behaviors (\ref{cond-asymp+k}) simply
follow taking the asymptotics of the Baxter equation satisfied by $%
Q_{t}(\lambda )$.
\end{proof}
\section{$\SQ$-operator: Existence and characterization}\label{Q-op-def}

Let us denote with ${\bf \Sigma}_{t}$ the eigenspace of the transfer matrix $\ST(\lambda )$ corresponding to the eigenvalue $t(\lambda )\in \Sigma _{\ST}$,
then:

\begin{defn}
Let $\mathsf{Q}(\lambda )$ be the operator family defined by: 
\begin{equation}
\mathsf{Q}(\lambda )|t\rangle \equiv Q_{t}(\lambda )|t\rangle \text{ \ \ }%
\forall |t\rangle \in {\bf \Sigma}_{t}\text{ \ and \ }\forall t(\lambda
)\in \Sigma _{\ST},
\end{equation}
with $Q_{t}(\lambda )$ the element of $\mathbb{C}[\lambda ]_{2l\SRN}$
corresponding to $t(\lambda )\in \Sigma _{\ST}$ by the injection\ defined in
the previous theorem.
\end{defn}

Under the assumptions $\xi $ and $\kappa $ real or imaginary numbers, which
assure the self-adjointness of the transfer matrix $\ST(\lambda )$ for $%
\lambda \in \mathbb{R}$, the following theorem holds:

\begin{thm}
The operator family $\mathsf{Q}(\lambda )$ is a Baxter $\SQ$-operator:

\begin{description}
\item[(A)] $\mathsf{Q}(\lambda )$ satisfies with $\ST(\lambda )$ the commutation
relations: 
\begin{equation}
\lbrack \mathsf{Q}(\lambda ),\ST(\mu )]=[\mathsf{Q}(\lambda ),\mathsf{Q}(\mu
)]=0\text{ \ \ }\forall \lambda ,\mu \in \mathbb{C}\text{,}
\end{equation}
plus the Baxter equation: 
\begin{equation}
\ST(\lambda )\mathsf{Q}(\lambda )={\tt a}(\lambda )\mathsf{Q}(\lambda
q^{-1})+{\tt d}(\lambda )\mathsf{Q}(\lambda q)\text{ \ \ }\forall \lambda \in 
\mathbb{C}.
\end{equation}

\item[(B)] $\mathsf{Q}(\lambda )$ is a polynomial of degree $2l\SRN$ in $%
\lambda $: 
\begin{equation*}
\mathsf{Q}(\lambda )\equiv \sum_{n=0}^{2l\SRN}\mathsf{Q}_{n}\lambda ^{n},
\end{equation*}
with coefficients $\mathsf{Q}_{n}$ self-adjoint operators.

\item[(C)] In the case $\SRN$ odd,\ the operator $\mathsf{Q}_{2l\SRN}=$\textsf{id}
and $\mathsf{Q}_{0}$ is an invertible operator.

\item[(D)] In the case $\SRN$ even, $\mathsf{Q}(\lambda )$ commutes with the $%
\Theta $-charge and\ the operator $\mathsf{Q}_{2l\SRN}$ is the orthogonal
projection onto the $\Theta $-eigenspace with eigenvalue 1. $\mathsf{Q}_{0}$
has non-trivial kernel coinciding with the orthogonal complement to the $%
\Theta $-eigenspace with eigenvalue 1.
\end{description}
\end{thm}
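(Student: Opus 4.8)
The plan is to deduce all four items from one structural fact: by construction $\mathsf{Q}(\lambda )$ is diagonal in a fixed common eigenbasis of the commuting family $\ST(\mu )$, and the remaining fine structure is then read off from the polynomial degree and the asymptotics of the $Q_{t}(\lambda )$ recorded in the previous two theorems. First I would settle well-posedness: by hypothesis the $\ST(\mu )$, $\mu \in \mathbb{R}$, are commuting self-adjoint operators on the SOV Hilbert space, hence simultaneously diagonalizable, and the SOV analysis of Section \ref{SOV} guarantees that their joint eigenspaces span the space; thus $\mathbb{C}^{p^{\SRN}}=\bigoplus_{t\in \Sigma _{\ST}}{\bf \Sigma}_{t}$ with the ${\bf \Sigma}_{t}$ mutually orthogonal. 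Since the previous theorem attaches to each $t\in \Sigma _{\ST}$ a polynomial $Q_{t}$ unique up to normalization, fixing the normalization once and for all --- I would take every $Q_{t}$ monic --- makes $\mathsf{Q}(\lambda )$ an unambiguously defined operator. Item (A) is then immediate: $\mathsf{Q}(\lambda ),\mathsf{Q}(\mu ),\ST(\mu )$ are all diagonal in the decomposition $\bigoplus_{t}{\bf \Sigma}_{t}$, hence commute pairwise; and evaluating on $|t\rangle \in {\bf \Sigma}_{t}$, the functional equation \rf{tq-Baxter} for $Q_{t}$ together with $\ST(\lambda )|t\rangle =t(\lambda )|t\rangle$ gives $\ST(\lambda )\mathsf{Q}(\lambda )|t\rangle =({\tt a}(\lambda )\mathsf{Q}(\lambda q^{-1})+{\tt d}(\lambda )\mathsf{Q}(\lambda q))|t\rangle$, which is the operator Baxter relation since the $|t\rangle$ span.

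For (B), each $Q_{t}\in \mathbb{C}[\lambda ]_{2l\SRN}$ by the previous theorem, so writing $Q_{t}(\lambda )=\sum_{n=0}^{2l\SRN}q_{n}(t)\lambda ^{n}$ and letting $\mathsf{Q}_{n}$ act on ${\bf \Sigma}_{t}$ as the scalar $q_{n}(t)$ yields $\mathsf{Q}(\lambda )=\sum_{n=0}^{2l\SRN}\mathsf{Q}_{n}\lambda ^{n}$ with each $\mathsf{Q}_{n}$ diagonal in the orthogonal decomposition; hence $\mathsf{Q}_{n}$ is self-adjoint once every $q_{n}(t)$ is real, i.e. once every $Q_{t}$ has real coefficients. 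The latter I would get as follows: self-adjointness of $\ST(\lambda )$ for $\lambda \in \mathbb{R}$ forces each eigenvalue function $t$ to be real on $\mathbb{R}$, hence $\overline{t(\bar\lambda )}=t(\lambda )$ as Laurent polynomials, and the explicit forms of ${\tt a},{\tt d}$ (using $|q|=1$, so $\bar q=q^{-1}$, and $\xi ,\kappa $ real or imaginary) give $\overline{{\tt a}(\bar\lambda )}={\tt d}(\lambda )$, $\overline{{\tt d}(\bar\lambda )}={\tt a}(\lambda )$. Conjugating \rf{tq-Baxter} at $\bar\lambda$ and inserting these identities shows that the polynomial $\lambda \mapsto \overline{Q_{t}(\bar\lambda )}$ solves \emph{the same} equation \rf{tq-Baxter}; being monic like $Q_{t}$ it must coincide with $Q_{t}$ by the uniqueness statement (Lemma 2 of \cite{NT}), so $Q_{t}$ is real and $\mathsf{Q}_{n}=\mathsf{Q}_{n}^{\dagger }$.

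Items (C) and (D) are obtained by reading off, for each $t$, the leading coefficient $q_{2l\SRN}(t)$ and the value $q_{0}(t)=Q_{t}(0)$ from \rf{cond-asymp}--\rf{cond-asymp+k}. For $\SRN$ odd, \rf{cond-asymp} says $Q_{t}(0)\neq 0$ and $\deg Q_{t}=2l\SRN$ for every $t$; by the monic normalization $q_{2l\SRN}(t)=1$, so $\mathsf{Q}_{2l\SRN}=\textsf{id}$, while $\mathsf{Q}_{0}$ is diagonal with all eigenvalues $q_{0}(t)\neq 0$, hence invertible. For $\SRN$ even, $[\mathsf{Q}(\lambda ),\Theta ]=0$ because $\Theta $ commutes with all $\ST(\mu )$ (Proposition 6 of \cite{NT}) and so preserves each ${\bf \Sigma}_{t}$, on which $\mathsf{Q}(\lambda )$ is scalar. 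Next, invoking \rf{asymptotics-t} and the grading $\Sigma _{\ST}=\bigcup_{k=0}^{l}\Sigma _{\ST}^{k}$, the $\Theta $-eigenspace with eigenvalue $1$ is exactly $\bigoplus_{t\in \Sigma _{\ST}^{0}}{\bf \Sigma}_{t}$ (for $k\in \{1,\dots ,l\}$ the $\Theta $-eigenvalues $q^{\pm k}$ on ${\bf \Sigma}_{t}$ are $\neq 1$ since $p=2l+1$); by the previous theorem $q_{2l\SRN}(t)=1$ for $t\in \Sigma _{\ST}^{0}$ and $q_{2l\SRN}(t)=0$ for $t\in \Sigma _{\ST}^{k}$, $k\geq 1$, so the self-adjoint $\mathsf{Q}_{2l\SRN}$ has eigenvalues in $\{0,1\}$ with $1$-eigenspace the $\Theta =1$ eigenspace, i.e. it is the orthogonal projection onto it; likewise $q_{0}(t)=Q_{t}(0)\neq 0$ for $t\in \Sigma _{\ST}^{0}$ and $=0$ otherwise, so $\ker \mathsf{Q}_{0}=\bigoplus_{k\geq 1}\bigoplus_{t\in \Sigma _{\ST}^{k}}{\bf \Sigma}_{t}$, the orthogonal complement of the $\Theta =1$ eigenspace.

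The one genuinely non-formal point is the reality of the coefficients of $Q_{t}$ in (B); everything else is bookkeeping with the common eigenbasis and with asymptotics already in hand. I expect this to be the main obstacle, since it is what upgrades the $\mathsf{Q}_{n}$ from normal to self-adjoint operators and, in (D), what makes $\mathsf{Q}_{2l\SRN}$ an \emph{orthogonal} projection rather than a mere idempotent. Should the conjugation argument prove delicate at the half-integer powers of $q$ entering ${\tt a},{\tt d}$, a hands-on alternative is to run the same reasoning directly on the cofactor construction of $Q_{t}$ from $D(\lambda )$, using the symmetry ${\tt a}\leftrightarrow {\tt d}$ induced by $q\mapsto q^{-1}$ at the level of $D(\lambda )$.
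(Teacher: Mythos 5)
Your proposal is correct and follows essentially the same route as the paper: well-definedness and (A) from the diagonal action on the orthogonal decomposition $\bigoplus_{t}{\bf \Sigma}_{t}$, self-adjointness of the $\mathsf{Q}_{n}$ from the conjugation symmetry $\overline{{\tt a}(\bar\lambda )}={\tt d}(\lambda )$, $\overline{t(\bar\lambda )}=t(\lambda )$ combined with the uniqueness of $Q_{t}$ (the paper phrases this as $(\mathsf{Q}(\lambda ))^{\dagger }=\mathsf{Q}(\lambda ^{\ast })$, which is equivalent to your reality of the coefficients), and (C)--(D) from the recorded asymptotics of $Q_{t}$ together with $[\ST,\Theta ]=0$. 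You merely make explicit some points the paper leaves implicit, notably the monic normalization (already built into \rf{Q_t-form}) and the identification of the $\Theta =1$ eigenspace with $\bigoplus_{t\in \Sigma _{\ST}^{0}}{\bf \Sigma}_{t}$.
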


\begin{proof}
Note that the self-adjointness of the transfer matrix $\ST(\lambda )$ implies
that $\mathsf{Q}(\lambda )$ is well defined, indeed its action is defined on
a basis. The property (A) is a trivial consequence of Definition 1. Note that the injectivity of the map $t(\lambda )\in
\Sigma _{\ST}$ $\rightarrow Q_{t}(\lambda )\in \mathbb{C}[\lambda ]_{2l\SRN}$
implies: 
\begin{equation}
\left( Q_{t}(\lambda )\right) ^{\ast }=Q_{t}(\lambda ^{\ast })\text{ \ }%
\forall \lambda \in \mathbb{C}
\end{equation}
being $\left( {\tt a}(\lambda )\right) ^{\ast }={\tt d}(\lambda ^{\ast })$ and $\left(
t(\lambda )\right) ^{\ast }=t(\lambda ^{\ast })$. So we get the Hermitian
conjugation property $\left( \mathsf{Q}(\lambda )\right) ^{\dag }=\mathsf{Q}%
(\lambda ^{\ast })$, i.e. the self-adjointness of the operators $\mathsf{Q}%
_{n}$. The properties (C) and (D) of the operators $\mathsf{Q}_{0}$ and $%
\mathsf{Q}_{2l\SRN}$ directly follow from the asymptotics of the eigenfunction $%
Q_{t}(\lambda )$ while the commutativity of $\mathsf{Q}(\lambda )$ and $%
\Theta $ is a direct consequence of the commutativity of $\ST(\lambda )$ and $%
\Theta $.
\end{proof}

\section{Conclusion}

In the previous section we have shown that by only using the
characterization of the spectrum of the transfer matrix obtained by the SOV
method we were able to reconstruct the $\SQ$-operator. It is also interesting to
point out as the results derived in \cite{NT} together with those of the
present article yield:

\begin{thm}
The family $\CQ$ which characterizes the quantum integrability of the lattice
Sine-Gordon model (see definition (\ref{def-integrability})) is described by
the transfer matrix $\ST(\lambda )$ for a chain with $\SRN$ odd number of sites
while by $\ST(\lambda )$ plus the $\Theta $-charge for a chain with $\SRN$ even
number of sites.
\end{thm}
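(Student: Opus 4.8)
The plan is to verify the three properties (A), (B), (C) of \rf{def-integrability} for the commutative family $\CQ$ generated by $\{\ST(\la):\la\in\BC\}$ when $\SRN$ is odd, and by $\{\ST(\la):\la\in\BC\}$ together with the $\Theta$-charge when $\SRN$ is even. (To present the even-chain family literally as a family of self-adjoint charges one replaces $\Theta$ by $i(\Theta-\Theta^{-1})$, which generates the same commutative algebra — since $\Theta+\Theta^{-1}$ is already read off from the asymptotics \rf{asymptotics-t} of $\ST(\la)$ — and which still separates the eigenvalues $q^{+k},q^{-k}$ because $1\le k\le l<p$; I will not dwell on this bookkeeping point.)

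Property (A) is already available: $[\ST(\la),\ST(\mu)]=0$ for all $\la,\mu$ is the content of the Yang--Baxter relations \rf{YBA}, and, for $\SRN$ even, $[\ST(\la),\Theta]=0$ is Proposition 6 of \cite{NT}. Property (B) is the place where the reconstruction of the previous sections is used: in the lattice Sine-Gordon model the unitary time-evolution operator $\SU$ is expressed in terms of the Baxter $\SQ$-operator, so by part (A) of the $\SQ$-operator theorem $[\SU,\ST(\mu)]=0$ for every $\mu$, and by part (D) also $[\SU,\Theta]=0$ when $\SRN$ is even; hence every generator of $\CQ$ commutes with $\SU$. What makes this step legitimate is precisely that the existence of the operator family $\SQ(\la)$ has now been established from SOV alone, so that $\SU$ genuinely lies in the functional calculus of $\CQ$.

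The substantive step is property (C): I must show that any operator $\SO$ commuting with every member of $\CQ$ is a function of $\CQ$. By self-adjointness the generators are simultaneously diagonalizable, so it suffices to prove that the joint spectrum of $\CQ$ is \emph{simple}. For this I use the SOV description: a simultaneous eigenstate $|t\rangle$ has wave function $\Psi_t(\eta)=\langle\eta|t\rangle$ solving the discrete Baxter system \rf{SOVBax1}, and, by the SOV analysis of \cite{NT}, this wave function is — up to a $t$-independent factor — the product over the separated variables $\eta_1,\dots,\eta_{[\SRN]}$ of the polynomial solution $Q_t$ of the Baxter equation, whose existence is Theorem \ref{Derivation-Baxter-functional} and whose uniqueness up to normalization is Lemma 2 of \cite{NT}. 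For $\SRN$ odd this determines $|t\rangle$ uniquely up to scale, so $\dim {\bf \Sigma}_{t}=1$ and the joint spectrum of $\{\ST(\la)\}$ is simple. For $\SRN$ even the factorized wave function leaves the dependence on the supplementary variable $\eta_\SRN$ undetermined by $t$ alone, but it is fixed up to scale by the extra relation $\ST_\SRN^+\Psi_{\pm k}(\eta)=q^{\pm k}\Psi_{\pm k}(\eta)$ valid for $t\in\Sigma_\ST^k$; hence the joint eigenspace of $\ST(\la)$ with eigenvalue $t\in\Sigma_\ST^k$ and of $\Theta$ with eigenvalue $q^{+k}$ (respectively $q^{-k}$) is one-dimensional, and the joint spectrum of $\{\ST(\la)\}\cup\{\Theta\}$ is simple. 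Once simplicity is known, $\SO$ is diagonal in the common eigenbasis, its eigenvalue on $|t\rangle$ depends only on the joint eigenvalue attached to $|t\rangle$, and — the joint spectrum being finite — this assignment is realized by an operator-valued function of finitely many generators (Lagrange interpolation in $\ST(\la_1),\dots,\ST(\la_m)$ for generic $\la_i$, together with $\Theta$ for $\SRN$ even), which gives the desired $\SO=\SO(\CQ)$.

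I expect the main obstacle to be establishing simplicity of the joint spectrum, especially for the even chain: one must check that the single numerical datum carried by the $\Theta$-eigenvalue removes \emph{exactly} the residual freedom — the dependence on $\eta_\SRN$ — left unfixed by the $\ST$-eigenvalue, so that no degeneracy survives unresolved by $\CQ$. The remaining steps are either quotations of results proved above (or in \cite{NT}) or the standard passage from a simple joint spectrum to the completeness property (C).
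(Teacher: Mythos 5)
Your proposal is correct and follows essentially the same route as the paper: property (A) from the Yang--Baxter commutativity and Proposition 6 of \cite{NT}, property (B) from the fact that $\SU$ is a function of the now-reconstructed $\SQ$-operator which commutes with (indeed is a function of) $\ST(\lambda)$, and property (C) from the simplicity of the joint spectrum established by the SOV analysis. The only difference is presentational: where you sketch the simplicity argument via the factorized SOV wave function (leaving the key non-degeneracy point open, as you note), the paper simply cites Proposition 3 and Theorem 4 of \cite{NT}, which contain exactly that argument.
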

\begin{proof}
Let us start noticing that Proposition 3 and Theorem 4 of \cite{NT} are
derived only using the SOV method (i.e. without any assumption about the
existence of the $\SQ$-operator). So only using SOV analysis we have derived
that for $\SRN$ odd the transfer matrix $\ST(\lambda )$ has simple spectrum while
for $\SRN$ even this is true for $\ST(\lambda )$ plus the $\Theta $-charge; i.e.
they define a complete family of commuting observables and so satisfy the
properties (A) and (C) of the definition (\ref{def-integrability}). In this
article we have moreover shown that the $\SQ$-operator is defined as a function of
the transfer matrix which implies the property (B) of (\ref%
{def-integrability}) recalling that in \cite{NT} the time-evolution operator
$\SU$ has been expressed as a function of the $\SQ$-operator.
\end{proof}

\bigskip
\bigskip 

Let us shortly point out the main features required in abstract to extend to cyclic representations of other
integrable quantum models the same kind of spectrum characterization derived here for the lattice Sine-Gordon
model.

\begin{enumerate}
\item[{\bf R1.}] The model admits an SOV description and the spectrum of the transfer
matrix can be characterized by a system of Baxter-like equations in the
$\ST$-wave-function $\Psi(\eta)=\langle\,\eta\,|\,t\,\rangle$:
\begin{equation}\label{general-Baxter-system}
t(\eta _{r})\Psi(\eta)\,=\,{\tt a}(\eta _{r})\Psi(\eta_1,\dots,q^{-1}\eta_r,\dots,\eta_\SRN)+{\tt d}(\eta _{r})\Psi(\eta_1,\dots,q\eta_r,\dots,\eta_\SRN)\,,
\end{equation}
where $(\eta _{1},...,\eta _{\SRN})\in \mathbb{B}_{\SRN}$ with $\mathbb{B}_{\SRN}$ the set of zeros of the $B$-operator in the SOV representation. Here, the parameter $q$ is a root of unity defined as in (\ref{Weyl}) and
(\ref{beta}).
\end{enumerate}

Note that for cyclic representations of an integrable quantum model the set $\mathbb{B}_{\SRN}$ is a finite
subset of $\mathbb{C}^{\SRN}$. So the coefficients ${\tt a}(\eta _{r})$ and ${\tt d}(\eta _{r})$ are specified only in a finite number of points where they satisfy the following average value relations\footnote{The equations in \rf{average1} are trivial consequences of the SOV representation and of the cyclicity.}:
\begin{equation}\label{average1}
\CA(\eta_r^p)\,=\,\prod_{k=1}^{p}{\tt a}(q^k\eta_r)\,,\qquad
\CD(\eta_r^p)\,=\,\prod_{k=1}^{p}{\tt d}(q^k\eta_r)\,.
\end{equation}
Here $\CA(\Lambda)$ and $\CD(\Lambda)$ are the average values of the operator entries $A(\lambda)$ and $D(\lambda)$ of the monodromy matrix. Let us recall that the operator entries of the monodromy matrix are expected to be polynomials (or Laurent polynomials) in the spectral parameter $\lambda$ so the corresponding average values are polynomials (or Laurent polynomials) in $\Lambda \equiv \lambda^p$. It is then natural to introduce the functions ${\tt a}(\lambda)$ and ${\tt d}(\lambda)$ as polynomial (or Laurent polynomial) solutions of the following average relations:
\begin{equation}\label{a-d-functions}
\CA(\Lambda)+\gamma\CB(\Lambda)\,=\,\prod_{k=1}^{p}{\tt a}(q^k\lambda)\,,\qquad
\CD(\Lambda)+\delta\CB(\Lambda)\,=\,\prod_{k=1}^{p}{\tt d}(q^k\lambda)\,,
\end{equation}
where $\CB(\Lambda)$ is the average value of the operator $B(\lambda)$ and $\gamma$ and $\delta$ are constant to be fixed.

\begin{enumerate}
\item[{\bf R2.}] Let us denote with $Z_{f(\la)}$ the set of the zeros of the functions $f(\la)$, then:
\begin{equation}\label{R2}
\exists \ \lambda_{0} \ \in \ Z_{{\tt a}(\la)}  \ : \  \lambda_{0} \ \notin \ \cup_{h=0}^{2l-1} \ Z_{{\tt d}(\lambda q^{h})}.
\end{equation}
\end{enumerate}

\ {\bf R3.} The average values of the functions ${\tt a}$ and ${\tt d}$ are
not coinciding in all the zeros of the $B$-operator:
\begin{equation}
\CA(\eta _{a}^p)\neq \CD(\eta _{a}^p)\text{\ \ \ }\forall a\in \{1,...,[\SRN]\} \  \ \text{and} \  \ (\eta _{1},...,
\eta _{[\SRN]})\in \mathbb{B}_{\SRN}.
\end{equation}

The requirement {\bf R1} yields the introduction of the $p\times p$ matrix $D(\lambda )$, defined as in \rf{D-matrix}, by the functions ${\tt a}(\lambda)$ and ${\tt d}(\lambda)$ solutions of \rf{a-d-functions}. This should allow us to reformulate the spectral problem for the transfer matrix as the problem to classify all the solutions $t(\lambda )$ to the functional equation $\det_{p}D(\Lambda )=0$ in a model dependent class of functions.

The requirement {\bf R2} implies that the rank of the matrix $D(\lambda )$ is almost everywhere $2l$. Indeed, the condition \rf{R2} implies \textsc{C}$_{1,p}(\lambda_0)\neq 0$, independently from the function $t(\lambda )$. Being the cofactor \textsc{C}$_{1,p}(\lambda)$ a continuous function of the spectral parameter the above statement on the rank of the matrix $D(\lambda )$ follows. Under this condition we can follow the procedure presented in Theorem \ref{Derivation-Baxter-functional} to construct the solutions of the Baxter equation. Then the self-adjointness of the transfer matrix $\ST$ allows us to proceed as in section \ref{Q-op-def} to show the existence of the $\SQ$-operator as a function of $\ST$.

The requirement {\bf R3} is a sufficient criterion\footnote{
It is worth noticing that in the case of the Sine-Gordon model the
criterion {\bf R3} does not apply to the representations with $%
u_{n}=v_{n}=1$. Nevertheless, we have shown the simplicity of $\ST$ by using some model dependent properties of the coefficients ${\tt a}(\lambda)$ and ${\tt d}(\lambda)$, see section 5 of \cite{NT}.} to show the simplicity of the spectrum of $\ST$
which should imply that the full integrable structure of the quantum model
should be described by the transfer matrix as soon as the property (B) in
definition (\ref{def-integrability}) is shown for the model under
consideration.

Following the schema here presented, in a future publication we will address the analysis of the spectrum for the
so-called $\alpha$-{\it sectors} of the Sine-Gordon model (see \cite{NT}). The use of this approach is in particular relevant in these sectors of the Sine-Gordon model because a direct construction of the $\SQ$-operator leads to some technical difficulty.

\appendix

\section{Properties of the cofactors \textsc{C}$_{i,j}(\protect\lambda )$}

Let us consider an $M \times M$ {\it tridiagonal} matrix \footnote{An interesting analysis of the eigenvalue problem for tridiagonal matrices is presented in \cite{Sk05}.} $O$:
\begin{equation}
O \equiv
\begin{pmatrix}
z_1   &y_1&   0        &\cdots & 0 & 0\\
x_1& z_2&y_2& 0     &\cdots & 0 \\
     0       &  x_2      &  z_3 &    y_3  &     &         \vdots   \\
  \vdots          &     &    & \ddots &    &     \vdots   \\
   \vdots   &           &    &  &  \ddots{\qquad}     &   0 \\
 0&\ldots&0& x_{M-2} & z_{M-1} & y_{M-1}\\
0   & 0      &\ldots      &     0  & x_{M-1}& z_{M}
\end{pmatrix}
\end{equation}i.e. a matrix with non-zero entries only along the principal diagonal and the next upper and lower diagonals.

\begin{lem}
\label{Det-tridiag}The determinant of a tridiagonal matrix is invariant
under the transformation $\varrho _{\alpha }$ which multiplies for $\alpha $
the entries above the diagonal and for $\alpha ^{-1}$ the entries below the
diagonal leaving the entries on the diagonal unchanged.
\end{lem}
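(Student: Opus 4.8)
The plan is to realise the transformation $\varrho_{\alpha}$ as conjugation of $O$ by an invertible diagonal matrix; since conjugation preserves the determinant, the lemma follows immediately. Throughout one takes $\alpha\neq 0$ — this is the only case of interest in the applications, where $\alpha$ is a $p$-th root of unity — so that the diagonal matrix introduced below is invertible (equivalently, this is exactly what is needed for $\varrho_\alpha$ to be well defined).

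Concretely, I would set
\[
S_\alpha\;\equiv\;\mathrm{diag}\bigl(1,\alpha^{-1},\alpha^{-2},\dots,\alpha^{-(M-1)}\bigr),\qquad (S_\alpha)_{ii}=\alpha^{-(i-1)},
\]
and compute, for all $i,j\in\{1,\dots,M\}$,
\[
\bigl(S_\alpha\,O\,S_\alpha^{-1}\bigr)_{ij}=\alpha^{-(i-1)}\,O_{ij}\,\alpha^{\,j-1}=\alpha^{\,j-i}\,O_{ij}.
\]
Because $O$ is tridiagonal, its only nonzero entries are $O_{ii}=z_i$, $O_{i,i+1}=y_i$ and $O_{i,i-1}=x_{i-1}$, on which the prefactor $\alpha^{\,j-i}$ equals $1$, $\alpha$ and $\alpha^{-1}$ respectively. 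Hence $S_\alpha\,O\,S_\alpha^{-1}$ is precisely the matrix $\varrho_\alpha(O)$ obtained by multiplying the super-diagonal by $\alpha$ and the sub-diagonal by $\alpha^{-1}$, and therefore
\[
\det_{M}\varrho_\alpha(O)=\det(S_\alpha)\,\det_{M}O\,\det(S_\alpha)^{-1}=\det_{M}O .
\]

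An equivalent, more elementary route is induction on $M$. Laplace-expanding along the last row (or column) yields the three-term recurrence $\det_{M}O = z_M\,\det_{M-1}O^{(M-1)} - x_{M-1}y_{M-1}\,\det_{M-2}O^{(M-2)}$, where $O^{(k)}$ is the leading $k\times k$ principal submatrix. Since $\varrho_\alpha$ fixes every diagonal entry $z_i$ and every product $x_iy_i$ (the factors $\alpha$ and $\alpha^{-1}$ cancel), it leaves this recurrence and its initial data ($\det_0=1$, $\det_1 O^{(1)}=z_1$) unchanged, hence also $\det_M O$.

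There is no genuine obstacle here: the only hypothesis to keep track of is $\alpha\neq 0$, and the only structural input is that tridiagonality forces conjugation by $S_\alpha$ to rescale exactly the two off-diagonals and leave the diagonal untouched. This lemma is then applied (with a suitable choice of $\alpha$, a product of a power of $q$ with the root of unity $\varphi$) to identify $\det_{2l}\overline{D}_{1,1}$ with $\det_{2l}D_{1,1}$, and likewise $\det_{2l-1}\overline{D}_{(1,2),(1,2)}$ and $\det_{2l-1}\overline{D}_{(1,2l+1),(1,2l+1)}$ with their undressed counterparts, in the proof of Theorem \ref{Derivation-Baxter-functional}.
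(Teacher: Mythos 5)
Your proof is correct, and your primary argument takes a genuinely different route from the paper's. The paper proves the lemma by induction on the size $M$, using the expansion $\det_{M}O=z_{1}\det_{M-1}O_{1,1}+x_{1}y_{1}\det_{M-2}O_{(1,2),(1,2)}$ of a tridiagonal determinant and observing that $\varrho_{\alpha}$ fixes the diagonal entries and the products of opposite off-diagonal entries; this is essentially your second, ``more elementary'' route (up to expanding from the last row instead of the first, and up to the sign in front of the $xy$ term, which is immaterial here since the product $x_{i}y_{i}$ is $\varrho_{\alpha}$-invariant either way). Your primary argument --- realising $\varrho_{\alpha}(O)$ as the similarity transform $S_{\alpha}OS_{\alpha}^{-1}$ with $S_{\alpha}=\mathrm{diag}(1,\alpha^{-1},\dots,\alpha^{-(M-1)})$ --- is not in the paper and is arguably cleaner: it avoids induction, it makes the (implicit in the paper) hypothesis $\alpha\neq 0$ explicit, and it in fact preserves the entire spectrum of $O$, not just its determinant. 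What the paper's inductive route buys is slightly greater generality: it only uses that the diagonal entries and the products $x_{i}y_{i}$ are unchanged, so it applies verbatim to any rescaling of the two off-diagonals whose factors multiply to $1$ entrywise, without packaging that rescaling as a conjugation. Your closing remark on how the lemma is applied --- to identify $\det_{2l}\overline{D}_{1,1}$, $\det_{2l-1}\overline{D}_{(1,2),(1,2)}$ and $\det_{2l-1}\overline{D}_{(1,2l+1),(1,2l+1)}$ with their undressed counterparts in the proof of Theorem \ref{Derivation-Baxter-functional} --- matches the paper's use exactly.
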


\begin{proof}
Let us note that the determinant of a tridiagonal matrix admits the
following expansion:
\begin{equation}
\det_{M}O=z_{1}\det_{M-1}O_{1,1}+x_{1}y_{1}\det_{M-2}O_{(1,2),(1,2)},
\end{equation}
where we have used the same notations introduced after formula \rf{detD-exp}. By using it,
we get that the action of $\varrho _{\alpha }$ reads:
\begin{equation}
\det_{M}\varrho _{\alpha }(O)=z_{1}\det_{M-1}\varrho _{\alpha
}(O)_{1,1}+x_{1}y_{1}\det_{M-2}\varrho _{\alpha }(O)_{(1,2),(1,2)}.
\end{equation}
Then the statement follows by induction noticing that the transformation $\varrho _{\alpha }$\ leaves always unchanged the determinant of a $2\times 2$ matrix.
\end{proof}
\begin{lem}
\label{Ap1}The following properties hold: 
\begin{equation}
\text{\textsc{C}}_{h+i,k+i}(\lambda )=\text{\textsc{C}}_{h,k}(\lambda q^{i})%
\text{ \ \ \ }\forall i,h,k\in \{1,...,2l+1\}\text{,}
\label{cofactors-diagonal}
\end{equation}
and: 
\begin{equation}
\text{\textsc{C}}_{1,1}(\lambda )=\text{\textsc{C}}_{1,1}(-\lambda )\text{ \
and \ \textsc{C}}_{2,1}(\lambda )=q^{\SRN}\text{\textsc{C}}_{1,2}(-\lambda ).
\label{cofactors-parity-0}
\end{equation}
\end{lem}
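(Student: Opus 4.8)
The plan is to obtain both identities from two exact symmetries of the one-parameter family $D(\lambda)$ and transport them to the cofactors via the equivariance of the classical adjoint. I will use that for any invertible $X$ one has $\mathrm{adj}(XYX^{-1})=X\,\mathrm{adj}(Y)\,X^{-1}$ (the factors $\det X^{\pm 1}$ cancel) and $\mathrm{adj}(Y^{T})=(\mathrm{adj}\,Y)^{T}$, together with the dictionary $\mathrm{adj}(D(\lambda))_{i,j}=\textsc{C}_{j,i}(\lambda)$ that is immediate from \rf{cofactor-def}. With this in hand it is enough to exhibit the two symmetries of $D$.

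For \rf{cofactors-diagonal} I would first read off from \rf{D-matrix}, using $q^{p}=1$, that $D(\lambda)_{r,s}$ depends on $(r,s)$ only through the residue $r-s\bmod p$ and involves $\lambda$ only through $q^{\,r-1}\lambda$; hence $D(q\lambda)_{r,s}=D(\lambda)_{r+1,s+1}$ with indices read modulo $p$, i.e. $D(q\lambda)=\Pi\,D(\lambda)\,\Pi^{-1}$ with $\Pi$ the cyclic-shift permutation matrix. Applying $\mathrm{adj}$ and the dictionary then gives $\textsc{C}_{h,k}(q\lambda)=\textsc{C}_{h+1,k+1}(\lambda)$ for all $h,k$ modulo $p$, and iterating $i$ times yields $\textsc{C}_{h+i,k+i}(\lambda)=\textsc{C}_{h,k}(q^{i}\lambda)$, which is exactly \rf{cofactors-diagonal}.

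For \rf{cofactors-parity-0} the remaining input is that $t$ is even, that ${\tt d}(\lambda)=q^{\SRN}{\tt a}(-q\lambda)$, and again $q^{p}=1$. The key observation is that on a $p$-periodic quasi-tridiagonal matrix the rescaling $\varrho_{q^{\SRN}}$ of Lemma \ref{Det-tridiag} coincides with conjugation by the diagonal matrix $G\equiv\mathrm{diag}\!\big(q^{-\SRN r}\big)_{r=1}^{p}$, precisely because $q^{\SRN p}=1$. I would therefore check entrywise --- diagonal, the two off-diagonals, and the two wrap-around corner entries of \rf{D-matrix} --- that $D(-\lambda)=G\,D(\lambda)^{T}\,G^{-1}$, the corner entries being where $q^{\SRN p}=1$ is consumed. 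Applying $\mathrm{adj}$ gives $\mathrm{adj}(D(-\lambda))=G\,(\mathrm{adj}\,D(\lambda))^{T}\,G^{-1}$, that is $\textsc{C}_{i,j}(-\lambda)=q^{\SRN(i-j)}\,\textsc{C}_{j,i}(\lambda)$ for all $i,j$; taking $(i,j)=(1,1)$ gives $\textsc{C}_{1,1}(-\lambda)=\textsc{C}_{1,1}(\lambda)$ and $(i,j)=(1,2)$ gives $\textsc{C}_{2,1}(\lambda)=q^{\SRN}\textsc{C}_{1,2}(-\lambda)$, which is \rf{cofactors-parity-0}. An essentially equivalent route, which invokes Lemma \ref{Det-tridiag} more directly, is to argue minor by minor: $D_{1,1}(\lambda)$ is genuinely tridiagonal and $D_{1,1}(-\lambda)$ is its transpose up to $\varrho_{q^{\SRN}}$, so Lemma \ref{Det-tridiag} gives the evenness of $\textsc{C}_{1,1}$ directly, while a short determinant computation of the residual scaling between the off-diagonal minors $D_{2,1}(\lambda)$ and $D_{1,2}(-\lambda)$ supplies the $q^{\SRN}$. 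The only genuinely fiddly point in all of this is the bookkeeping at the two wrap-around corners of the quasi-tridiagonal $D(\lambda)$, along with keeping the cofactor/adjugate index conventions straight; I do not expect a real obstacle.
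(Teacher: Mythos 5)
Your proof is correct. For \rf{cofactors-diagonal} you and the paper do essentially the same thing: the paper proves it "by exchanging rows and columns in the determinants" using $q^{p}=1$, which is precisely your conjugation $D(q\lambda)=\Pi\,D(\lambda)\,\Pi^{-1}$ by the cyclic shift, transported to cofactors. For \rf{cofactors-parity-0}, however, your primary route is genuinely different from the paper's. The paper argues minor by minor: it proves evenness of $\text{\textsc{C}}_{1,1}$ by applying the rescaling invariance of Lemma \ref{Det-tridiag} to the tridiagonal block $D_{1,1}(\lambda)$ and recognizing $\bigl(D_{1,1}(-\lambda)\bigr)^{T}$, and it gets $\text{\textsc{C}}_{2,1}(\lambda)=q^{\SRN}\text{\textsc{C}}_{1,2}(-\lambda)$ by expanding both cofactors as $\prod {\tt a}$ (resp. $\prod {\tt d}$) plus a multiple of the even determinant $\det_{2l-1}D_{(1,2),(1,2)}(\lambda)$ and invoking ${\tt d}(\lambda)=q^{\SRN}{\tt a}(-\lambda q)$ --- this is exactly your "essentially equivalent route". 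Your main argument instead establishes the single global identity $D(-\lambda)=G\,D(\lambda)^{T}\,G^{-1}$ with $G=\mathrm{diag}(q^{-\SRN r})$ (I checked all five entry types, including the two corners where $q^{\SRN p}=1$ is needed) and pushes it through the adjugate; this is cleaner, avoids separate expansions for each cofactor, and delivers the full symmetry $\text{\textsc{C}}_{i,j}(\lambda)=q^{\SRN(i-j)}\text{\textsc{C}}_{j,i}(-\lambda)$ of \rf{cofactors-parity} in one stroke --- an identity the paper states in Remark 3 but explicitly declines to prove. The only costs are the index bookkeeping you already flagged (the sign conventions in $\mathrm{adj}(D)_{i,j}=\text{\textsc{C}}_{j,i}$ and the cyclic reordering of rows and columns in the shifted minors, whose signs cancel between rows and columns), none of which is an obstacle.
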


\begin{proof}
Note that by the definition \rf{cofactor-def} of the cofactors $\textsc{C}_{i,j}(\lambda)$ the equations (\ref{cofactors-diagonal}) are simple consequences of $q^{p}=1$\ and are proven exchanging rows and columns in the determinants.

Let us prove now that the cofactor \textsc{C}$_{1,1}(\lambda )=\det_{2l}$%
$D_{1,1}(\lambda )$ is an even function of $\lambda $. The
tridiagonality of the matrix $D_{1,1}(\lambda )$ allows us to use
the previous lemma: 
\begin{align}
\text{\textsc{C}}_{1,1}(\lambda )& \equiv \det_{2l}\left\Vert t(\lambda
q^{h})\delta _{h,k}-{\tt a}(\lambda q^{h})\delta _{h,k+1}-q^{\SRN}{\tt a}(-\lambda
q^{h+1})\delta _{h,k-1}\right\Vert _{h>1,k>1}  \notag \\
& =\det_{2l}\left\Vert t(\lambda q^{h})\delta _{h,k}-q^{\SRN}{\tt a}(\lambda
q^{h})\delta _{h,k+1}-{\tt a}(-\lambda q^{h+1})\delta _{h,k-1}\right\Vert
_{h>1,k>1}  \notag \\
& =\det_{2l}\left\Vert t(\lambda q^{h})\delta _{h,k}-{\tt d}(-\lambda q^{k})\delta
_{k,h-1}-{\tt a}(-\lambda q^{k})\delta _{k,h+1}\right\Vert _{h>1,k>1}  \notag \\
& \equiv \det_{2l}\left( D_{1,1}(-\lambda )\right) ^{T}=%
\text{\textsc{C}}_{1,1}(-\lambda ).  \label{C11_even}
\end{align}
To prove now the second relation in (\ref{cofactors-parity-0}) we expand the
cofactors:
\begin{eqnarray}
\text{\textsc{C}}_{2,1}(\lambda ) &=&\prod_{h=2}^{2l+1}{\tt a}(\lambda
q^{h})+{\tt d}(\lambda )\det_{2l-1}D_{(1,2),(1,2)}(\lambda ),
\label{C_2,1-expan} \\
\text{\textsc{C}}_{1,2}(\lambda ) &=&\prod_{h=1}^{2l}{\tt d}(\lambda
q^{h})+{\tt a}(\lambda q)\det_{2l-1}D_{(1,2),(1,2)}(\lambda )\text{%
.}  \label{C_1,2-expan}
\end{eqnarray}
By using the same steps shown in (\ref{C11_even}), the tridiagonality of the
matrix \textsc{D}$_{(1,2),(1,2)}(\lambda )$ implies that its determinant is
an even function of $\lambda $ from which the statement \textsc{C}$%
_{2,1}(\lambda )=q^{\SRN}$\textsc{C}$_{1,2}(-\lambda )$ follows recalling that $%
{\tt d}(\lambda )=q^{\SRN}{\tt a}(-\lambda q)$.
\end{proof}

\textbf{Remark 3.} \ In this article we need only the properties (\ref
{cofactors-parity-0}); however, it is worth pointing out that they are
special cases of the following properties of the cofactors:
\begin{equation}
\text{\textsc{C}}_{i,j}(\lambda )=q^{\SRN(i-j)}\text{\textsc{C}}%
_{j,i}(-\lambda )\text{ \ \ \ }\forall i,j\in \{1,...,2l+1\}.
\label{cofactors-parity}
\end{equation}

The proof of (\ref{cofactors-parity}) can be done similarly to that of (\ref
{cofactors-parity-0}) but we omit it for simplicity.
\vspace{0.2cm}

Let us use once again the notation $Z_{f}$ for the set of the zeros of a function 
$f(\lambda)$, then:
\begin{lem}\label{cofactors-zeros}
The equations (\ref{Inter-step}) and (\ref{Bax-eq}) imply:
\begin{equation}
\text{$Z$}_{\text{\textsc{C}}_{1,1}}\cap \text{$Z$}_{\text{\textsc{C}}%
_{1,2}}\equiv \text{$Z$}_{\text{\textsc{C}}_{1,1}}\cap \text{$Z$}_{\text{%
\textsc{C}}_{1,2l+1}}\text{.}
\end{equation}
\end{lem}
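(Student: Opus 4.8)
The plan is to analyze the two identities \rf{Inter-step} and \rf{Bax-eq} together and use the symmetry of the situation under the swap of the two off-diagonal cofactors $\textsc{C}_{1,2}$ and $\textsc{C}_{1,2l+1}$. First I would prove one inclusion, say $Z_{\textsc{C}_{1,1}}\cap Z_{\textsc{C}_{1,2}}\subseteq Z_{\textsc{C}_{1,1}}\cap Z_{\textsc{C}_{1,2l+1}}$. Suppose $\lambda^{*}$ is a common zero of $\textsc{C}_{1,1}$ and $\textsc{C}_{1,2}$. Then \rf{Bax-eq} evaluated at $\lambda^{*}$ reads $0 = {\tt a}(\lambda^{*})\textsc{C}_{1,2l+1}(\lambda^{*})$, so either $\textsc{C}_{1,2l+1}(\lambda^{*})=0$ (which is what we want) or ${\tt a}(\lambda^{*})=0$. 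The hard part is therefore dealing with the case ${\tt a}(\lambda^{*})=0$, and this is where \rf{Inter-step} must be brought in.

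For that case I would use \rf{Inter-step}, which I would first rewrite in the cross-multiplied (denominator-free) form $\textsc{C}_{1,1}(\lambda q)\,\textsc{C}_{1,1}(\lambda) = \textsc{C}_{1,2}(\lambda)\,\textsc{C}_{1,2l+1}(\lambda q)$; this holds as an identity of Laurent polynomials, so no zeros need to be excluded. Evaluating this at $\lambda = \lambda^{*}$: the left side vanishes because $\textsc{C}_{1,1}(\lambda^{*})=0$, hence $\textsc{C}_{1,2}(\lambda^{*})\,\textsc{C}_{1,2l+1}(\lambda^{*} q)=0$; but $\textsc{C}_{1,2}(\lambda^{*})=0$ already, so this is automatic and gives nothing new. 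Instead I would evaluate the cross-multiplied identity at $\lambda = \lambda^{*} q^{-1}$, obtaining $\textsc{C}_{1,1}(\lambda^{*})\,\textsc{C}_{1,1}(\lambda^{*} q^{-1}) = \textsc{C}_{1,2}(\lambda^{*} q^{-1})\,\textsc{C}_{1,2l+1}(\lambda^{*})$; the left side again vanishes, so $\textsc{C}_{1,2}(\lambda^{*} q^{-1})\,\textsc{C}_{1,2l+1}(\lambda^{*})=0$. If $\textsc{C}_{1,2l+1}(\lambda^{*})\neq0$ this forces $\textsc{C}_{1,2}(\lambda^{*} q^{-1})=0$, and then combining with the shifted relation $\textsc{C}_{1,2}(\lambda)=q^{-\bar\SRN_{1,1}}\overline{\textsc{C}}_{1,1}(\lambda q)$ type structure — more directly, using \rf{cofactors-diagonal} to reinterpret $\textsc{C}_{1,2}(\lambda^{*} q^{-1})$ and ${\tt a}(\lambda^{*})=0$ in the Baxter relation \rf{Bax-eq} shifted by $q^{-1}$ — I can propagate the vanishing of the off-diagonal cofactor around the cyclic lattice of shifts $\{\lambda^{*} q^{h}\}$. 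Since the shift by $q$ has order $p$ and ${\tt a}$ has only finitely many zeros that cannot annihilate all $p$ sites (this is essentially where an argument like {\bf R2} or the explicit form of ${\tt a},{\tt d}$ enters), one reaches a contradiction, establishing $\textsc{C}_{1,2l+1}(\lambda^{*})=0$.

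The reverse inclusion $Z_{\textsc{C}_{1,1}}\cap Z_{\textsc{C}_{1,2l+1}}\subseteq Z_{\textsc{C}_{1,1}}\cap Z_{\textsc{C}_{1,2}}$ follows by the same argument with the roles of $\textsc{C}_{1,2}$ and $\textsc{C}_{1,2l+1}$ interchanged: from \rf{Bax-eq} a common zero $\lambda^{*}$ of $\textsc{C}_{1,1}$ and $\textsc{C}_{1,2l+1}$ gives $0={\tt d}(\lambda^{*})\textsc{C}_{1,2}(\lambda^{*})$, and since ${\tt d}(\lambda)=q^{\SRN}{\tt a}(-\lambda q)$ the exceptional case ${\tt d}(\lambda^{*})=0$ is handled by the mirror of the cross-multiplied form of \rf{Inter-step}, namely writing \rf{Inter-step} as $\textsc{C}_{1,2}(\lambda)\textsc{C}_{1,2l+1}(\lambda q) = \textsc{C}_{1,1}(\lambda)\textsc{C}_{1,1}(\lambda q)$ and evaluating at the appropriate shift. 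Putting the two inclusions together gives the claimed equality $Z_{\textsc{C}_{1,1}}\cap Z_{\textsc{C}_{1,2}} = Z_{\textsc{C}_{1,1}}\cap Z_{\textsc{C}_{1,2l+1}}$.

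The main obstacle I anticipate is the exceptional case where the coefficient ${\tt a}(\lambda^{*})$ (or ${\tt d}(\lambda^{*})$) vanishes at the candidate common zero, because then \rf{Bax-eq} alone is vacuous and one genuinely needs the multiplicative relation \rf{Inter-step}; care is required to clear denominators correctly (the ratios in \rf{Inter-step} are only defined where $\textsc{C}_{1,1}$ and $\textsc{C}_{1,2l+1}$ are nonzero) and to make the cyclic propagation argument airtight using $q^{p}=1$ together with the finiteness of the zero set of ${\tt a}$. I would be careful to phrase everything in terms of the polynomial identities obtained by cross-multiplying, so that all evaluations are legitimate, and to invoke the explicit product form of ${\tt a}(\lambda)$ (or the later hypothesis of type {\bf R2}) only at the single point where a non-vanishing statement is genuinely needed.
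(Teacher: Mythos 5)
Your strategy coincides with the paper's up to the decisive step: the two ``easy'' inclusions come straight from \rf{Bax-eq} away from $Z_{\tt a}$ and $Z_{\tt d}$, and the real content of the lemma is the exceptional case ${\tt a}(\lambda^{*})=0$ (resp.\ ${\tt d}(\lambda^{*})=0$), which you correctly attack by turning \rf{Inter-step} into the denominator-free identity $\text{\textsc{C}}_{1,1}(\lambda q)\text{\textsc{C}}_{1,1}(\lambda)=\text{\textsc{C}}_{1,2}(\lambda)\text{\textsc{C}}_{1,2l+1}(\lambda q)$ (legitimate, since it is one of the vanishing $2\times 2$ minors of the rank-one cofactor matrix) and evaluating at $\lambda=\lambda^{*}q^{-1}$. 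That lands you exactly on the paper's pivotal relation $\text{\textsc{C}}_{1,2}(\lambda^{*}q^{-1})\,\text{\textsc{C}}_{1,2l+1}(\lambda^{*})=0$. The gap is in how you finish. What is needed at this point is one concrete non-vanishing statement: $\text{\textsc{C}}_{1,2}(\lambda^{*}q^{-1})\neq 0$ whenever ${\tt a}(\lambda^{*})=0$ (and the mirror statement $\text{\textsc{C}}_{1,2l+1}(\lambda^{*}q)\neq 0$ whenever ${\tt d}(\lambda^{*})=0$). The paper obtains it from the tridiagonal expansion \rf{C_1,2-expan}, which gives $\text{\textsc{C}}_{1,2}(\lambda q^{-1})=\prod_{h=0}^{2l-1}{\tt d}(\lambda q^{h})+{\tt a}(\lambda)\det_{2l-1}D_{(1,2),(1,2)}(\lambda q^{-1})$: at a zero of ${\tt a}$ the cofactor collapses to $\prod_{h=0}^{2l-1}{\tt d}(\lambda q^{h})$, which is nonzero because $Z_{\tt a}$ is disjoint from the zero set of that product for the Sine--Gordon coefficients. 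Your substitute --- ``propagating the vanishing around the cyclic orbit $\{\lambda^{*}q^{h}\}$'' --- does not close: having deduced $\text{\textsc{C}}_{1,2}(\lambda^{*}q^{-1})=0$, you cannot restart the argument at $\lambda^{*}q^{-1}$ because you do not know that $\text{\textsc{C}}_{1,1}(\lambda^{*}q^{-1})=0$, so neither \rf{Bax-eq} nor the cross-multiplied \rf{Inter-step} produces a new constraint there and no contradiction materializes. Moreover, the relation $\overline{\text{\textsc{C}}}_{1,2}(\lambda)\propto\overline{\text{\textsc{C}}}_{1,1}(\lambda q)$ you gesture at is \rf{proportionality-cofactor}, which is derived in Theorem \ref{Derivation-Baxter-functional} only \emph{after}, and by means of, the present lemma; invoking it here would be circular.

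You do sense that the explicit form of ${\tt a}$ (or an {\bf R2}-type hypothesis) must enter at exactly one point, which is the right instinct, but you aim it at the wrong target: it is not needed to rule out ${\tt a}$ vanishing on a whole $q$-orbit, it is needed to evaluate the off-diagonal cofactor \emph{at} the offending point. The fix is to combine \rf{C_1,2-expan}, \rf{cofactors-diagonal} and \rf{C_2,1-expan} to write $\text{\textsc{C}}_{1,2}(\lambda q^{-1})$ and $\text{\textsc{C}}_{1,2l+1}(\lambda q)$ as a pure product of ${\tt d}$'s (resp.\ ${\tt a}$'s) modulo a multiple of ${\tt a}(\lambda)$ (resp.\ ${\tt d}(\lambda)$), and then use that ${\tt a}(\lambda)$ has no common zero with $\prod_{h=0}^{2l-1}{\tt d}(\lambda q^{h})$ and ${\tt d}(\lambda)$ none with $\prod_{h=2}^{2l+1}{\tt a}(\lambda q^{h})$. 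With that single substitution your argument becomes the paper's proof.
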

\begin{proof}
The inclusions $\left( Z_{\text{\textsc{C%
}}_{1,1}}\cap Z_{\text{\textsc{C}}_{1,2}}\right) \backslash Z_{\tt a}\subset Z_{%
\text{\textsc{C}}_{1,1}}\cap Z_{\text{\textsc{C}}_{1,2l+1}}$ and $\left( Z_{%
\text{\textsc{C}}_{1,1}}\cap Z_{\text{\textsc{C}}_{1,2l+1}}\right)
\backslash Z_{\tt d}\subset Z_{\text{\textsc{C}}_{1,1}}\cap Z_{\text{\textsc{C}}%
_{1,2}}$ trivially follow by equation (\ref{Bax-eq}). 

Let us observe now that \textsc{C}$_{1,2}(\lambda q^{-1})$ has no common
zero with ${\tt a}(\lambda )$ and that \textsc{C}$_{1,2l+1}(\lambda q)$ has no
common zero with ${\tt d}(\lambda )$. These statements simply follow from (\ref{C_1,2-expan}), (\ref{cofactors-diagonal})and(\ref{C_2,1-expan}) when we recall that ${\tt a}(\lambda)$ has
no common zero with $\prod_{h=0}^{2l-1}{\tt d}(\lambda q^{h})$ and that ${\tt d}(\lambda)$ has no common zero with $\prod_{h=2}^{2l+1}{\tt a}(\lambda q^{h})$. So, if $%
\left( Z_{\text{\textsc{C}}_{1,1}}\cap Z_{\text{\textsc{C}}_{1,2}}\right)
\cap Z_{\tt a}$ is not empty and $\lambda _{0}\in \left( Z_{\text{\textsc{C}}%
_{1,1}}\cap Z_{\text{\textsc{C}}_{1,2}}\right) \cap Z_{\tt a}$, the equation
(\ref{Inter-step}) computed in $\lambda =q^{-1}\lambda _{0}$ implies \textsc{%
C}$_{1,2l+1}(\lambda _{0})=0$ being \textsc{C}$_{1,2}(\lambda
_{0}q^{-1})\neq 0$, i.e. $\lambda _{0}\in Z_{\text{\textsc{C}}_{1,1}}\cap Z_{%
\text{\textsc{C}}_{1,2l+1}}$. Similarly, if $\left( Z_{\text{\textsc{C}}_{1,1}}\cap
Z_{\text{\textsc{C}}_{1,2l+1}}\right) \cap Z_{\tt d}$ is not empty and $\lambda _{0}\in \left(
Z_{\text{\textsc{C}}_{1,1}}\cap Z_{\text{\textsc{C}}_{1,2l+1}}\right) \cap Z_{\tt d}$, the equation (\ref{Inter-step}) computed in $\lambda =\lambda _{0}$ implies \textsc{C}$_{1,2}(\lambda _{0})=0$ being
\textsc{C}$_{1,2l+1}(\lambda _{0}q)\neq 0$, i.e. $\lambda _{0}\in Z_{\text{\textsc{C}}_{1,1}}\cap
Z_{\text{\textsc{C}}_{1,2}}$. So that (\ref{Inter-step}) implies the inclusions $\left(
Z_{\text{\textsc{C}}_{1,1}}\cap Z_{\text{\textsc{C}}_{1,2}}\right) \cap Z_{\tt a}\subset
Z_{\text{\textsc{C}}_{1,1}}\cap Z_{\text{\textsc{C}}_{1,2l+1}}$ and $\left( Z_{\text{\textsc{C}}_{1,1}}\cap
Z_{\text{\textsc{C}}_{1,2l+1}}\right) \cap Z_{\tt d}\subset Z_{\text{\textsc{C}}_{1,1}}\cap
Z_{\text{\textsc{C}}_{1,2}}$ in this way completing
the proof of the lemma.
\end{proof}

\section{Scalar product in the SOV space}
Here is described as a natural structure of Hilbert space can be provided to the linear space of the SOV representation by preserving the self-adjointness of the transfer matrix.

\subsection{Cyclic representations of the Weyl algebra}

Here, we consider the cyclic representations of the Weyl algebra $W_{q}^{(n)}$ in the case: 
\begin{equation}
\su_{n}^{p}=\sv_{n}^{p}=1\text{ for }\beta ^{2}=p^{\prime }/p\text{ with }%
p^{\prime }\text{ even and }p=2l+1\text{ odd.}
\end{equation}
At any site $n$ of the chain, we introduce the quantum space $\mathcal{R}%
_{n} $ with $\sv_{n}$-eigenbasis: 
\begin{equation}
\sv_{n}|k,n\rangle =q^{k}|k,n\rangle \text{ \ }\forall |k,n\rangle \in \text{B}%
_{n}=\{|k,n\rangle ,\forall k\in \{-l,...,l\}\}.  \label{v-eigenbasis}
\end{equation}
Note that the eigenvalues of $\sv_{n}$ describe the unit circle $\mathbb{S}%
_{p}=\{q^{k}:k\in \{-l,...,l\}\}$, indeed $q^{l+1}=q^{-l}$. On $\mathcal{R}%
_{n}$ is defined a $p$-dimensional representation of the Weyl algebra by
setting: 
\begin{equation}
\su_{n}|k,n\rangle =|k+1,n\rangle \text{ \ }\forall k\in \{-l,...,l\}
\end{equation}
with the cyclicity condition: 
\begin{equation}
|k+p,n\rangle =|k,n\rangle .
\end{equation}

\subsection{Representation in the SOV basis}

The analysis developed in \cite{NT} define recursively the
eigenbasis $\{|\bar{\eta}_{1}q^{h_{1}},...,\bar{\eta}_{\SRN}q^{h_{\SRN}}\rangle \}$ of the $B$-operator in the original representation, i.e. as linear combinations of the elements of the basis $\{|h_{1},...,h_{\SRN}%
\rangle \equiv\,\bigotimes_{n=1}^{\SRN}|h_{n},n\rangle \}$, where $|h_{n},n\rangle$ are the elements of the $\sv_{n}$-eigenbasis defined in (\ref{v-eigenbasis}). To write this
change of basis in a matrix form let us introduce the following notations: 
\begin{equation}
|y_{j}\rangle \equiv\,|\bar{\eta}_{1}q^{h_{1}},...,\bar{\eta}%
_{\SRN}q^{h_{\SRN}}\rangle \text{ \ and \ }|x_{j}\rangle
\equiv\,|h_{1},...,h_{\SRN}\rangle \text{\ }
\end{equation}
where: 
\begin{equation}
j:=h_{1}+\sum_{a=2}^{\SRN}(2l+1)^{(a-1)}(h_{a}-1)\in \{1,...,(2l+1)^{\SRN}\},
\label{corrisp}
\end{equation}
note that this defines a one to one correspondence between $\SRN$-tuples $%
\left( h_{1},...,h_{\SRN}\right) \in \{1,...,2l+1\}^{\SRN}$ and integers $j\in
\{1,...,(2l+1)^{\SRN}\}$, which just amounts to chose an ordering in the
elements of the two basis. Under this notation, we have: 
\begin{equation}
|y_{j}\rangle =\SW|x_{j}\rangle =\sum_{i=1}^{(2l+1)^{\SRN}}\SW_{i,j}|x_{i}\rangle ,
\end{equation}
where we are representing $|x_{j}\rangle $ as the vector $|j\rangle $ in the
natural basis in $\mathbb{C}^{(2l+1)^{\SRN}}$ and $\SW=||\SW_{i,j}||$ is a $%
(2l+1)^{\SRN}\times (2l+1)^{\SRN}$ matrix. The matrix $\SW$ is defined by recursion
in terms of the kernel $K$ constructed in appendix C of \cite{NT}, let us use the notation: 
\begin{equation}
K_{(\{h_{1},...,h_{\SRN}\},k_{1},\{k_{2},...,k_{\SRN}\})}\equiv\,K_\srn^{}(\,\eta\,|\,\chi_{\2}^{};\chi_\1^{}\,),
\end{equation}
where we are considering the case $\SRN-\SRM=1$. Then the recursion reads: 
\begin{equation}
\SW_{i,j}^{(\SRN)}=\sum_{k_{2},...,k_{\SRN}=1}^{2l+1}K_{(%
\{h_{1}(j),...,h_{\SRN}(j)\},h_{1}(i),\{k_{2},...,k_{\SRN}\})}\SW_{\bar{h}%
(i),a(k_{2},...,k_{\SRN})}^{(\SRN-1)},
\end{equation}
where we have introduced the index $(\SRN)$ and $(\SRN-1)$ in the matrices $\SW$ to
make clear the step of the recursion. Here, $(h_{1}(j)$,...,$h_{\SRN}(j))$ is
the unique $\SRN$-tuples corresponding to the integer $j\in \{1$,...,$%
(2l+1)^{\SRN}\}$ and $h_{1}(i)$ is the first entry in the unique $\SRN$-tuples
corresponding to the integer $i\in \{1$,...,$(2l+1)^{\SRN}\}$. Moreover, we
have defined: 
\begin{equation}
\bar{h}(i):=1+\frac{i-h_{1}(i)}{2l+1}\in \{1,...,(2l+1)^{(\SRN-1)}\}\text{ \ \
and \ \ }a(k_{2},...,k_{\SRN})=k_{2}+\sum_{a=3}^{\SRN}(2l+1)^{(a-2)}(k_{a}-1),
\end{equation}
\textbf{Remarks:}

a) Under the change of basis $\{|x_{j}\rangle \}\rightarrow \{|y_{j}\rangle
\}$ the generic operator X transforms for similarity: 
\begin{equation}
\text{X}_{SOV}\equiv\,\SW^{-1}\text{X}\SW,
\end{equation}%
so from the action of the zero operators $\eta _{a}$ and the shift operators
$\ST_{a}^{\pm }$ on the $B$-eigenbasis $|y_{j}\rangle $: 
\begin{equation}
\eta _{a}|y_{j}\rangle =\bar{\eta}_{a}q^{h_{a}(j)}|y_{j}\rangle \text{ \ and
\ } \ST_{a}^{\pm }|y_{j}\rangle =|y_{j\pm (2l+1)^{(a-1)}}\rangle 
\end{equation}%
we have that: 
\begin{equation}
\left( \eta _{a}\right) _{SOV}=\bar{\eta}_{a}||q^{h_{a}(j)}\delta _{i,j}||%
\text{ \ and \ }\left( \text{T}_{a}^{\pm }\right) _{SOV}=||\delta _{i,j\pm
(2l+1)^{(a-1)}}||\text{.}
\end{equation}%
From the above expression we have\footnote{%
Here, we are using the standard notation for the adjoint $X^{\dagger
}\equiv\,(X^{\ast })^{t}$.}: 
\begin{equation}
\left( \eta _{a}\right) _{SOV}^{\dagger }=\left( \eta _{a}\right)
_{SOV}^{\ast }\text{\ and \ }\left( \text{T}_{a}^{\pm }\right)
_{SOV}^{\dagger }=\left( \text{T}_{a}^{\mp }\right) _{SOV}\text{.}
\end{equation}

b) The known transformation properties of the entries of the monodromy
matrix in the original representation imply: 
\begin{equation}
\left( 
\begin{array}{cc}
\SD_{SOV}(\lambda ) & \SC_{SOV}(\lambda ) \\ 
\SB_{SOV}(\lambda ) & \SA_{SOV}(\lambda )%
\end{array}
\right) =\left( 
\begin{array}{cc}
\SG^{-1}\left( \SA_{SOV}(\lambda ^{\ast })\right) ^{\dagger }\SG & -\SG^{-1}\left(
\SB_{SOV}(\lambda ^{\ast })\right) ^{\dagger }\SG \\ 
-\SG^{-1}\left( \SC_{SOV}(\lambda ^{\ast })\right) ^{\dagger }\SG & \SG^{-1}\left(
\SD_{SOV}(\lambda ^{\ast })\right) ^{\dagger }\SG%
\end{array}
\right) ,  \label{sov-adj}
\end{equation}
with $\SG$ is a positive self-adjoint matrix defined by $\SG:=\SW^{\dagger }\SW$.

c) The quantum determinant relation is invariant under similarity
transformations and so we have: 
\begin{equation}
a(\lambda )d(\lambda q^{-1})=\SA_{SOV}(\lambda )\SD_{SOV}(\lambda
q^{-1})-\SB_{SOV}(\lambda )\SC_{SOV}(\lambda q^{-1}),  \label{Q-det-Sov}
\end{equation}

\begin{lem}
The basis $\{|y_{j}\rangle \}$ is not an orthogonal basis w.r.t. the natural
scalar product on $\{|x_{j}\rangle \}$.
\end{lem}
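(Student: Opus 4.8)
The plan is to recast orthogonality of $\{|y_j\rangle\}$ as a statement about the matrix $\SG=\SW^\dagger\SW$ and then to contradict it using the conjugation relation \rf{sov-adj}. Since the natural scalar product is the one for which $\{|x_j\rangle\}$ is orthonormal, one has $\langle y_i|y_j\rangle=(\SW^\dagger\SW)_{ij}=\SG_{ij}$, so $\{|y_j\rangle\}$ is orthogonal if and only if $\SG$ is a (positive) diagonal matrix. I would assume, for contradiction, that $\SG$ is diagonal.

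The first key observation is that $\SB_{SOV}(\lambda)=\SW^{-1}\SB(\lambda)\SW$ is diagonal in the basis $\{|x_j\rangle\}$: by construction each $|y_j\rangle=\SW|x_j\rangle$ is an eigenvector of $\SB(\lambda)$ (with eigenvalue read off from \rf{Bdef}), hence $\SB_{SOV}(\lambda)|x_j\rangle$ is proportional to $|x_j\rangle$. Now read the $(1,2)$-entry of \rf{sov-adj}: $\SC_{SOV}(\lambda)=-\SG^{-1}\bigl(\SB_{SOV}(\lambda^\ast)\bigr)^\dagger\SG$. The adjoint of the diagonal matrix $\SB_{SOV}(\lambda^\ast)$ is diagonal, and conjugating a diagonal matrix by the diagonal matrix $\SG$ leaves it diagonal; therefore, under the assumption that $\SG$ is diagonal, $\SC_{SOV}(\lambda)$ would be diagonal for every $\lambda$.

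To close the argument I would then note that $\SB_{SOV}$ and $\SC_{SOV}$ cannot both be diagonal. One way to see this is the quantum determinant relation \rf{Q-det-Sov}: if $\SB_{SOV}(\lambda)$ and $\SC_{SOV}(\lambda q^{-1})$ were both diagonal, then $\SA_{SOV}(\lambda)\SD_{SOV}(\lambda q^{-1})=a(\lambda)d(\lambda q^{-1})+\SB_{SOV}(\lambda)\SC_{SOV}(\lambda q^{-1})$ would be diagonal, which is impossible since in the SOV representation $\SA(\lambda)$ and $\SD(\lambda)$ act (after interpolation in $\lambda$ between the zeros of $\SB$) as genuine one-step shift operators $\ST_a^{\mp}$, whose products contain the non-diagonal cross terms $\ST_a^-\ST_b^+$ with $a\neq b$. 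Equivalently, one may invoke the quadratic relations \rf{YBA} with the $R$-matrix \rf{Rlsg}, which make $[\SB(\lambda),\SC(\mu)]$ a nonvanishing combination of $\SA$ and $\SD$, so that $\SB$ and $\SC$ never commute. Either way one reaches a contradiction, so $\SG$ is not diagonal and $\{|y_j\rangle\}$ is not an orthogonal basis.

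I expect the only genuine obstacle to lie in the last step: justifying that $\SC_{SOV}$ (equivalently $\SA_{SOV}\SD_{SOV}$) is honestly non-diagonal. This is immediate once one uses the Yang--Baxter commutation relations or the explicit SOV shift structure of $\SA$ and $\SD$ established in \cite{NT}, but it is the point where algebraic input about the model enters; the rest is bookkeeping around the change of basis $\SW$ and the conjugation identity \rf{sov-adj}.
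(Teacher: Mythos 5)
Your proof is correct and follows essentially the same route as the paper: reduce orthogonality of $\{|y_j\rangle\}$ to $\SG$ being diagonal, use that $\SB_{SOV}(\lambda)$ is diagonal together with the conjugation relation \rf{sov-adj}, and derive a contradiction with the Yang--Baxter algebra. The paper phrases the final step as the non-normality of $\SB(\lambda)$, namely $[\SB^{\dagger}(\lambda),\SB(\mu)]=[\SB(\mu),\SC(\lambda^{\ast})]\neq 0$, which is precisely the ``equivalently'' clause of your last paragraph, so your quantum-determinant variant is only a minor repackaging of the same algebraic input.
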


\begin{proof}

Note that the condition $\{|y_{j}\rangle \}$ is an orthogonal basis is
equivalent to the statement $\SG$ is a diagonal matrix (with positive diagonal
entries). Let us recall that the Hermitian conjugation property of $%
B(\lambda )$ together with the Yang-Baxter commutation relations imply: 
\begin{equation}
\lbrack \SB^{\dagger }(\lambda ),\SB(\mu )]=[\SB(\mu ),\SC(\lambda ^{\ast })]=\frac{%
q-q^{-1}}{\lambda ^{\ast }/\mu -\mu /\lambda ^{\ast }}(\SA(\lambda ^{\ast
})\SD(\mu )-\SA(\mu )\SD(\lambda ^{\ast }))\neq 0
\end{equation}%
that is the operator $B(\lambda )$ is not a normal operator. Now let us show
that the non-normality of $B(\lambda )$ implies that $\SG$ is not diagonal.
Indeed, we can write: 
\begin{align}
\lbrack \SB^{\dagger }(\lambda ),\SB(\mu )]& =\left( \SW^{\dagger }\right)
^{-1}\left( \SB_{SOV}(\lambda )\right) ^{\dagger }\SG \SB_{SOV}(\mu
)\SW^{-1}-\SW \SB_{SOV}(\mu )\SG^{-1}\left( \SB_{SOV}(\lambda )\right) ^{\dagger
}\SW^{\dagger }  \notag \\
& =\SW(\SG^{-1}(\SB_{SOV}(\lambda ))^{\dagger }\SG \SB_{SOV}(\mu )-\SB_{SOV}(\mu
)\SG^{-1}\left( \SB_{SOV}(\lambda )\right) ^{\dagger }\SG)\SW^{-1}.
\end{align}%
Note now that if we assume $\SG$ diagonal, then $\SG$\ commutes both with $\SB_{SOV}(\lambda )$ and with $(\SB_{SOV}(\lambda ))^{\dagger }$, being all
diagonal matrices in the SOV representation, which implies the absurd $%
[\SB^{\dagger }(\lambda ),\SB(\mu )]=0$.
\end{proof}

\subsection{Scalar product in the SOV space}

The self-adjointness of the family $\ST(\lambda )$ implies that the transfer
matrix eigenstates are orthogonal under the original scalar product: 
\begin{equation}
\delta _{i,j}=(|t_{i}\rangle ,|t_{j}\rangle ),
\end{equation}%
we have chosen the orthonormal ones. Note that the above equation naturally
induces a scalar product in the SOV representation obtained under change of
basis: 
\begin{equation}
(|b\rangle ,|a\rangle )_{SOV}\equiv\,(\SG|b\rangle ,|a\rangle )  \label{sov-scalar}
\end{equation}%
that is a scalar product for which the adjoint of a vector $|a\rangle $ is
the natural adjoint times the matrix $\SG$: 
\begin{equation}
|b\rangle ^{\dagger _{SOV}}\equiv\,\langle b|\SG\text{ \ with \ }\langle b|=\left(
\left( |b\rangle \right) ^{t}\right) ^{\ast },
\end{equation}%
and so for the generic operator $X$ we have: 
\begin{equation}
X^{\dagger _{SOV}}\equiv\,\SG^{-1}X^{\dagger }\SG.
\end{equation}%
It is trivial to notice that:

\begin{lem}
The family of operators $\ST_{SOV}(\lambda )$ is self-adjoint w.r.t. $\dagger
_{SOV}$ and the eigenstates $|t_{j}\rangle _{SOV}\equiv\,\SW^{-1}|t_{j}\rangle $ are
orthonormal w.r.t. the scalar product defined in (\ref{sov-scalar}). Moreover,
it results: 
\begin{equation}
\left( 
\begin{array}{cc}
\left( \SA_{SOV}(\lambda ^{\ast })\right) ^{\dagger _{SOV}} & \left(
\SB_{SOV}(\lambda ^{\ast })\right) ^{\dagger _{SOV}} \\ 
\left( \SC_{SOV}(\lambda ^{\ast })\right) ^{\dagger _{SOV}} & \left(
\SD_{SOV}(\lambda ^{\ast })\right) ^{\dagger _{SOV}}%
\end{array}
\right) =\left( 
\begin{array}{cc}
\SD_{SOV}(\lambda ) & -\SC_{SOV}(\lambda ) \\ 
-\SB_{SOV}(\lambda ) & \SA_{SOV}(\lambda )%
\end{array}
\right) .
\end{equation}
\end{lem}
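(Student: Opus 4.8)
The plan is to recognise that every assertion of the lemma is transport of structure along the change of basis $\SW^{-1}$, which I will show is a unitary map from the original Hilbert space onto $\BC^{p^{\SRN}}$ endowed with the deformed product $(\cdot,\cdot)_{SOV}$ of \rf{sov-scalar}. Granting this, orthonormality of the transfer-matrix eigenstates, self-adjointness of $\ST_{SOV}(\la)$ and the adjunction table for the monodromy entries are all immediate.

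First I would verify the isometry. Using $(\,|u\ket,|v\ket\,)_{SOV}=(\SG|u\ket,|v\ket)$ with $\SG=\SW^{\dagger}\SW$ positive and self-adjoint (remark (b)), one has $(\SW^{-1}|b\ket,\SW^{-1}|a\ket)_{SOV}=(\SG\SW^{-1}|b\ket,\SW^{-1}|a\ket)=\bra b|(\SW^{-1})^{\dagger}\SG\SW^{-1}|a\ket=\bra b|(\SW^{\dagger})^{-1}\SW^{\dagger}\SW\SW^{-1}|a\ket=\bra b|a\ket$. Hence $|t_{j}\ket_{SOV}\equiv\SW^{-1}|t_{j}\ket$ inherits the orthonormality $\de_{i,j}=(|t_{i}\ket,|t_{j}\ket)$ of the original eigenbasis, which is the second claim; the very same computation shows that the prescription $X\mapsto\SG^{-1}X^{\dagger}\SG$ defining $\dagger_{SOV}$ is precisely the adjoint attached to $(\cdot,\cdot)_{SOV}$, since $(X|u\ket,|v\ket)_{SOV}=(\SG X|u\ket,|v\ket)=(\SG|u\ket,\SG^{-1}X^{\dagger}\SG\,|v\ket)=(|u\ket,\SG^{-1}X^{\dagger}\SG\,|v\ket)_{SOV}$.

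Self-adjointness of $\ST_{SOV}(\la)$ then follows because $\ST_{SOV}(\la)=\SW^{-1}\ST(\la)\SW$ is unitarily equivalent, via $\SW^{-1}$, to $\ST(\la)$, which is self-adjoint for $\la\in\BR$ under the standing reality assumptions on $\xi,\kappa$; explicitly $(\ST_{SOV}(\la))^{\dagger_{SOV}}=\SG^{-1}(\SW^{-1}\ST(\la)\SW)^{\dagger}\SG=(\SW^{\dagger}\SW)^{-1}\SW^{\dagger}\ST(\la)^{\dagger}(\SW^{\dagger})^{-1}\SW^{\dagger}\SW=\SW^{-1}\ST(\la)^{\dagger}\SW=\ST_{SOV}(\la)$ for real $\la$ (and $=\ST_{SOV}(\la^{\ast})$ in general). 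Finally, the displayed adjunction matrix is just \rf{sov-adj} re-read: each block of its right-hand side has the form $\SG^{-1}X^{\dagger}\SG=X^{\dagger_{SOV}}$, so \rf{sov-adj} states $\SD_{SOV}(\la)=(\SA_{SOV}(\la^{\ast}))^{\dagger_{SOV}}$, $\SC_{SOV}(\la)=-(\SB_{SOV}(\la^{\ast}))^{\dagger_{SOV}}$, $\SB_{SOV}(\la)=-(\SC_{SOV}(\la^{\ast}))^{\dagger_{SOV}}$ and $\SA_{SOV}(\la)=(\SD_{SOV}(\la^{\ast}))^{\dagger_{SOV}}$; substituting $\la\to\la^{\ast}$ and collecting terms gives exactly the claimed equality, of which the self-adjointness of $\ST_{SOV}=\SA_{SOV}+\SD_{SOV}$ is once more the diagonal part.

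I do not anticipate a genuine obstacle: the lemma is essentially an unwinding of definitions. The only points demanding care are the bookkeeping of the conjugation $\la\leftrightarrow\la^{\ast}$ — so that ``$\ST_{SOV}$ self-adjoint'' is understood for $\la\in\BR$, equivalently as the identity $(\ST_{SOV}(\la))^{\dagger_{SOV}}=\ST_{SOV}(\la^{\ast})$ — and checking once that $\SG$ being positive and self-adjoint makes $(\cdot,\cdot)_{SOV}$ a bona fide inner product and $X\mapsto\SG^{-1}X^{\dagger}\SG$ a bona fide adjoint, which is the elementary computation displayed above.
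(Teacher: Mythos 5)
Your proof is correct and is exactly the intended argument: the paper offers no proof of this lemma at all (it is introduced with ``It is trivial to notice that''), and your unwinding --- $\SW^{-1}$ is an isometry onto $\BC^{p^{\SRN}}$ with the $\SG$-deformed product, $X\mapsto\SG^{-1}X^{\dagger}\SG$ is the adjoint attached to that product, and the displayed adjunction table is \rf{sov-adj} reread through that identification --- supplies precisely the omitted details. The only cosmetic remark is that the four relations you extract from \rf{sov-adj} already coincide entry by entry with the claimed matrix identity, so no further substitution $\la\to\la^{\ast}$ is needed.
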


\end{document}